\newcommand\BibTeX{{\rmfamily B\kern-.05em \textsc{i\kern-.025em b}\kern-.08em
T\kern-.1667em\lower.7ex\hbox{E}\kern-.125emX}}
\renewcommand{\baselinestretch}{1.5}
\newtheorem{theorem}{Theorem}[section]
\newtheorem{lemma}{Lemma}
\begin{document}

\title{A Primal-Dual Algorithm for a Heterogeneous Traveling Salesman Problem}

\author{Jungyun Bae$^1$, Sivakumar Rathinam$^2$
\thanks{1. Graduate Student, Department of Mechanical Engineering, Texas A
\& M University, College Station, Texas, U.S.A 77843.}
\thanks{2. Assistant Professor,
Department of Mechanical Engineering, Texas A \& M University,
College Station, Texas, U.S.A 77843.}
}
\maketitle

\begin{abstract}
Surveillance applications require a collection of heterogeneous vehicles to visit a set of targets. We
consider a fundamental routing problem that arises in these applications involving two vehicles. Specifically, we consider a routing problem where there are two heterogeneous vehicles that start from distinct initial locations, and a set of targets. The objective is to find a tour for each vehicle such that each of the targets is visited at least once by a vehicle and the sum of the distances traveled by the vehicles is a minimum. We present a primal-dual algorithm for a variant of this routing problem that provides an approximation ratio of 2.
\end{abstract}

{\keywords{Approximation algorithms, Primal-Dual method, Traveling Salesman Problem, Heterogeneous vehicles, Prize collecting TSP}}

\section*{INTRODUCTION}

Heterogeneous unmanned vehicles are commonly used in surveillance applications for monitoring and tracking a set of targets. For example, in the Cooperative Operations in Urban Terrain project \cite{COUNTER} at the Air Force Research Laboratory, a team of unmanned vehicles are required to monitor a set of targets and send information/video about the targets to the ground station controlled by a human operator. The human operator may further add new locations of potential targets or task the vehicles to revisit the targets at different angles. Once the human operator enters his/her input through the human-machine interface, the central computer associated with the interface has few minutes to determine the motion plans for each of the vehicles. A fundamental subproblem that has to be solved by this computer is the problem of finding a tour for each vehicle so that each target is visited at least once by some vehicle and an objective that depends on the distances traveled by the vehicles is a minimum. A common objective that is used for these applications is the sum of the total distances traveled by all the vehicles. If there is only one vehicle, this routing problem is referred to as the Traveling Salesman Problem (TSP) in the literature. If there are multiple vehicles that (possibly) start from different initial locations or depots, then this routing problem is referred to as the Multiple Depot, TSP. Once the routing problem is solved and the tours have been determined, a nominal trajectory can be specified for each vehicle to include other kinematic constraints of the vehicles using the results in \cite{Lee:5approx}, \cite{Rathinam_IEEEAutomation}.

A multiple depot, TSP is a generalization of the single TSP and is NP-Hard. This routing problem is further complicated if the vehicles involved are heterogeneous. In this article, vehicles are considered to be heterogeneous if the distance to travel between any two targets depend on the type of the vehicle used. In the context of unmanned applications, as a multiple depot heterogeneous TSP is generally a subproblem that needs to be solved, we are interested in developing fast algorithms that produce approximate solutions than find optimal solutions that may be relatively difficult to solve. Therefore, the main focus of this article is to develop approximation algorithms for heterogeneous TSPs. An approximation algorithm for a problem is an algorithm that runs in polynomial time and produces a solution whose cost is at most a given factor away from the optimal cost for every instance of the problem.

The objective of this article is to develop a primal-dual algorithm for a two depot, heterogeneous TSP (2DHTSP). In addition to assuming that the costs satisfy the triangle inequality for each vehicle, we consider a variant of the problem where the cost of traveling between any two targets for the first vehicle is at most equal to the cost of traveling between the same targets for the second vehicle. Using these assumptions, we show that the developed primal-dual algorithm has an approximation ratio of 2. We are motivated to address this variant of the 2DHTSP due to the following reasons:
\begin{enumerate}
\item The 2DHTSP is one the simplest cases of the general multiple depot, heterogeneous TSP. The objective of this work is to first develop good algorithms that can handle these simpler cases efficiently.
\item Consider a scenario where each of the vehicles is modeled as a ground robot that can move both forwards and backwards with a constraint on its minimum turning radius\cite{reed_shepp}. If the approach angle at each target is given and the minimum turning radius of the first vehicle is at most equal to the minimum turning radius of the second vehicle, it follows that the optimal distance required to travel between any two targets for the first vehicle will be at most equal to the optimal distance required for the second vehicle. Therefore, the problem addressed in this article is a useful variant to address.
 \item The 2DHTSP is a generalization of a 2 depot, homogeneous TSP where there are additional \textit{vehicle-target} constraints which require one of the vehicles to necessarily visit a given subset of targets in addition to visiting any common target available for both the vehicles. This variant of 2 depot, homogeneous TSP arises in applications where the distance to travel between the targets are identical for both the vehicles, but one of the vehicles carry sensors that require the vehicle to visit a subset of targets compulsorily.
  \item For some missions involving identical vehicles, it is sometimes necessary to minimize the maximum cost incurred by any of the vehicles. This problem is referred to as the min-max, multiple depot, homogeneous TSP in the literature. If there are only two vehicles involved, one can use the variant of the heterogeneous TSP considered in this article to compute bounds for the min-max problem. Specifically, let $TOUR_1$ and $TOUR_2$ denote a feasible pair of tours for the first and the second vehicle respectively. Also, for $i=1,2$, let $cost(TOUR_i)$ denote the cost of traversing the tour for the $i^{th}$ vehicle. Then, the min-max problem can be formulated as $\min_{TOUR_1,TOUR_2} z$ subject to the constraints $cost(TOUR_1)\leq z$, and $cost(TOUR_2)\leq z$. By dualizing the constraints, one obtains a relaxed problem of the form $\max_{\pi_1+\pi_2=1} \min_{TOUR_1,TOUR_2}~ [\pi_1cost(TOUR_1) + \pi_2 cost(TOUR_2)]$. Therefore, for a given value of the penalty variable $\pi_1$, the relaxation involves solving the heterogeneous TSP considered in this article.
\end{enumerate}

Without the assumptions on the costs of the two vehicles, the 2DHTSP is a generalization of the standard variant of the prize collecting TSP considered by Goemans and Williamson in \cite{GoemansW95}. In this variant, each target essentially has a penalty associated with it. The objective of the prize collecting TSP is to find a tour for the vehicle that starts and ends at the depot such that the cost of the tour plus the sum of the penalties of each target not present in the tour is a minimum. For any two vertices $i$ and $j$, if $\pi_i, \pi_j$ denote the penalties of $i$ and $j$ respectively, then one can pose the prize collecting TSP as a 2DHTSP by setting the cost of traveling the edge joining vertices $i$ and $j$ for the second vehicle to be equal to $\frac{\pi_i+\pi_j}{2}$. Essentially, by choosing the penalty variable corresponding to the second depot to be equal to 0, one can deduce that the travel cost for the second vehicle is actually equal to the sum of the penalties of the targets not present in the tour of the first vehicle. Even though there are no penalties explicitly mentioned in the 2DHTSP, the tour cost for the second vehicle which essentially account for targets not visited by the first vehicle act as penalties. Essentially, our algorithm is based on the well known moat growing procedure proposed by Goemans and Williamson in \cite{GoemansW95}. For these reasons, the primal-dual algorithm presented in this article is based on the primal-dual algorithm available for the prize-collecting TSP in \cite{GoemansW95}.

Most of the work in the literature related to approximation algorithms for multiple depot, TSPs deal with identical vehicles. For example, when the costs satisfy the triangle inequality, there are several approximation algorithms for the multiple depot, homogeneous TSP in \cite{Rathinam_IEEEAutomation},\cite{approxMDMTSPmalik},\cite{approx2HPP},\cite{XuZhou}. Recently, a $3-$approximation algorithm was presented for a two depot, heterogeneous TSP in \cite{sai-heterogeneousTSP}. This algorithm partitions the targets by solving a linear programming relaxation and then uses Christofides algorithm \cite{christofides} to find a sequence of targets for each vehicle.

The 2-approximation algorithms available in the literature for the multiple depot, TSP generally follow a two-step procedure. In the first step, a constrained forest problem which is generally a relaxation of the multiple depot, TSP is solved optimally. In the second step, an Eulerian graph is found for each vehicle based on the constrained forest. From these Eulerian graphs, a tour can be found for each vehicle by short-cutting any target already visited by a vehicle. In this article, we follow a similar procedure where we first find a heterogeneous spanning forest using a primal-dual algorithm by solving a relaxation of the 2DHTSP. Then, the edges in the heterogeneous spanning forest are doubled to obtain an Eulerian graph for each vehicle. Given these Eulerian graphs, one can \cite{approxalgo} always find a tour for each vehicle that visits each of the targets exactly once. {\it The crux of this procedure depends on finding a good heterogeneous spanning forest}. Using a primal-dual algorithm, we find a heterogeneous spanning forest whose cost is at most equal to the optimal cost of the 2DHTSP in polynomial time. Hence, it follows that the approximation ratio of the proposed procedure is 2.

\section{Problem Statement}\label{sec:problem}
Let $D=\{d_1,d_2\}$ denote the two depots (initial locations) corresponding to the first and the second vehicle respectively. Let $T$ be the set of targets to be visited by both the vehicles. Let $V_1:=T\bigcup \{d_1\}$ be the set of vertices corresponding to the first vehicle. Similarly, let $V_2:=T\bigcup \{d_2\}$ be the set of vertices corresponding to the second vehicle. For $i=1,2$, let $E_i$ denote the set of all the edges that join any two distinct vertices in $V_i$. Let the cost of traversing an edge $e\in E_1$ for the first vehicle be denoted by $cost^1_{e}$. Similarly, let the cost of traversing an edge $e \in E_2$ for the second vehicle be denoted by $cost^2_{e}$. We will assume that it is always cheaper to travel between any two targets using the first vehicle as compared to using the second vehicle, $i.e.$, for any edge $e$ joining two targets, $cost^1_{e}\leq cost^2_{e}$. We also assume that the costs satisfy the triangle inequality for both the vehicles.

A tour for the first vehicle starts from its depot $d_1$, visits a set of targets in a sequence and finally returns to $d_1$. A tour for the second vehicle starts from its depot $d_2$, visits a set of targets in a sequence and finally returns to $d_2$. The objective of the 2DHTSP is to find a tour for each vehicle such that each target is visited exactly once by some vehicle and the sum of the cost of the edges traveled by both the vehicles is a minimum.

\section{Problem formulation}
Let $x_e$ be an integer variable that represents whether edge $e\in E_1$ is present in the tour corresponding to the first vehicle. For any edge $e$ joining \textit{two targets}, $x_e$ can take values only in the set $\{0,1\}$; $x_e=1$ if $e$ is present in the tour of the first vehicle and $x_e=0$ otherwise. In order for a tour to visit just one target if required, $x_e$ is allowed to choose any of the values in the set $\{0,1,2\}$ for an edge $e$ joining the depot $d_1$ and a target $v\in T$. Similarly, let $y_e$ be an integer variable that represents whether edge $e\in E_2$ is present in the tour corresponding to the second vehicle. Let $z_U$ be a binary variable that determines the partition of targets connected to the first and the second depot; $z_U$ is equal to 1 if each target in $U\subseteq T$ is connected to the second depot and each target in $T\setminus U$ is connected to the first depot. There is at most one subset of targets, $U$, that is allowed to have $z_U$ to be equal to 1. Let $\delta_i(S)$ (for $i=1,2$) denote the subset of all the edges of $E_i$ with one end in $S$ and an other end in $V_i\setminus S$. $\delta_i(S)$ is also referred to as the cut set of $S$ corresponding to the $i^{th}$ vehicle.

For any $S\subseteq T$, at least two edges must be chosen from $\delta_1(S)$ for the tour of the first vehicle if there is at least one vertex in $S$ that is not connected to the second depot, $i.e.$, $\sum_{e\in \delta_1(S)} x_e \geq 2$ if $\sum_{T\supseteq U\supseteq S} z_U = 0$. This requirement can be written as $\sum_{e\in \delta_1(S)} x_e  + 2\sum_{T\supseteq U\supseteq S} z_U  \geq 2$. Similarly, for any $S\subseteq T$, at least two edges must be chosen from $\delta_2(S)$ for the tour of the second vehicle if all the vertices in $S$ are required to be visited by the second vehicle. This requirement can be expressed as $\sum_{e\in \delta_2(S)}y_e  \geq 2\sum_{T\supseteq U\supseteq S} z_U$. Now, consider the following integer programming relaxation for the 2DHTSP without the degree constraints:

\begin{align*}
C_{lp} = \min \sum_{e \in E_1} cost_e^1 ~ x_e & + \sum_{e \in E_2} cost_e^2 ~ y_e
\end{align*}

\begin{align}
\sum_{e\in \delta_1(S)} x_e  + 2\sum_{T\supseteq U\supseteq S} z_U  &\geq 2 \hspace{1.9cm} \forall  S \subseteq T, \label{eq1} \\
\sum_{e\in \delta_2(S)}y_e & \geq 2\sum_{T\supseteq U\supseteq S} z_U \quad   \forall  S \subseteq T, \label{eq2} \\
\sum_{U \subseteq T} z_U & \leq 1,  \label{eq5} \\
x_e,y_e  \in \{0,1\} &~ \forall e\textrm{ joining any two targets}, \label{eq6} \\
x_e  \in \{0,1,2\} &~ \forall e\textrm{ joining }d_1\textrm{ and a target}, \\
y_e  \in \{0,1,2\} &~ \forall e\textrm{ joining }d_2\textrm{ and a target}, \\
 \quad z_U  \in \{0,1\} &~ \forall U \subseteq T. \label{eq7}
\end{align}

Consider a Linear Programming (LP) relaxation of the above integer program where the constraints (\ref{eq5})-(\ref{eq7}) are relaxed as follows:
\begin{align}
C_{lp} = \min \sum_{e \in E_1} cost_e^1 ~ x_e & + \sum_{e \in E_2} cost_e^2 ~ y_e
\end{align}

\begin{align}
\sum_{e\in \delta_1(S)} x_e  + 2\sum_{T\supseteq U\supseteq S} z_U  &\geq 2 \hspace{1.9cm} \forall  S \subseteq T, \label{eq:TSPvehicle1}\\
\sum_{e\in \delta_2(S)}y_e & \geq 2\sum_{T\supseteq U\supseteq S} z_U \quad   \forall  S \subseteq T, \label{eq:TSPvehicle2}\\
x_e  \geq 0 ~ \forall e & \in E_1,\quad y_e \geq 0 ~ \forall e  \in E_2, \nonumber \\
 \quad z_U & \geq 0 ~ \forall U \subseteq T.
\end{align}

\newpage

A dual of the above LP relaxation can be formulated as follows:
\begin{align}
C_{dual} = \max~ 2 \sum_{S \subseteq T} Y_1(S)
\end{align}

\begin{align}
\sum_{S: e\in \delta_1(S)} Y_1(S)  & \leq cost^1_e \hspace{1.1cm} \forall  e \in E_1, \label{eq:dualfirst}\\
\sum_{S: e\in \delta_2(S)} Y_2(S)  & \leq cost^2_e \hspace{1.1cm} \forall  e \in E_2, \label{eq:dualsecond}\\
\sum_{S \subseteq U} Y_1(S)  & \leq \sum_{S \subseteq U} Y_2(S)  \quad \forall U \subseteq T, \label{eq:dualbound}\\
Y_1(S) ,Y_2(S) & \geq 0 \hspace{1.7cm} \forall S \subseteq T.
\end{align}

We use the above dual problem to find a Heterogeneous Spanning Forest (HSF). A HSF is a collection of two trees  where the first tree spans a subset of targets and $d_1$, and the second tree connects the remaining set of targets to $d_2$. In the next section, we discuss the main ideas involved in the primal-dual algorithm that finds a HSF. We later present the details of the algorithm and show that the cost of this HSF is at most equal to the optimal cost of the above dual. This leads to a 2-approximation algorithm for the 2DHTSP.

\section{Main ideas of the Primal Dual Algorithm}
The primal-dual algorithm follows the greedy procedure outlined by Goemans and Williamson in \cite{GoemansW95}. The basic structure of the algorithm involves maintaining a forest of edges corresponding to each vehicle, and a solution to the dual problem. The edges in the forests are candidates for the set of edges that finally appear in the output (HSF) of the algorithm. Suppose $F_1$ and $F_2$ denote the forest corresponding to the first and the second vehicle respectively. Let the set of connected components in $F_1$ and $F_2$ be denoted by ${\mathcal{C}}_1$ and ${\mathcal{C}}_2$ respectively. Initially, both ${\mathcal{C}}_1$ and ${\mathcal{C}}_2$ consist of components where each vertex is in its own connected component, $i.e.$, ${\mathcal{C}}_1=\{\{v\}: v \in V_1\}$ and ${\mathcal{C}}_2=\{\{v\}: v \in V_2\}$. That is, both $F_1$ and $F_2$ are empty. All the components are initially active except the components that contain the depots (Refer to the figures \ref{step1}-\ref{step8} for an illustration of the algorithm). Also, all the dual variables are set to zero.

 The primal-dual algorithm is an iterative algorithm where in each iteration, at most one edge is added between two distinct components of $F_1$ or $F_2$ thus merging the two components. The choice of selecting the appropriate edge to be added is based on a dual solution which is also updated during each iteration. Specifically, in each iteration, the algorithm uniformly increases the dual variable of each active component by a value that is as large as possible such that none of the constraints in (\ref{eq:dualfirst})-(\ref{eq:dualbound}) are violated. When the dual variables are increased, one of the following outcomes is possible:

 \begin{itemize}
 \item If any of the constraints in (\ref{eq:dualfirst})-(\ref{eq:dualsecond}) becomes tight for some edge $(u,v)\in E_i,~ i=1,2$ between two distinct components in $F_i$, then the algorithm adds $(u,v)$ to $F_i$ and merges the two components (Refer to figures \ref{step2}-\ref{step4}). If the merged component contains a depot, it becomes inactive; otherwise it is active.  We can also explain this outcome in the following way: Suppose $p_i(u):=\sum_{S: u\in S} Y_i(S)$ is the total price all the components containing target $u$ are willing to pay to develop a network $F_i$ that can connect $u$ to depot $d_i$. Then, the edge $e:=(u,v)$ is added to $F_i$ when $p_i(u) + p_i(v)=cost^i_e$, $i.e.$, the price paid by the components containing $u$ and the components containing $v$ equals the cost of adding an edge $(u,v)$ to the network. If a component $\overline{C}$ of $F_2$ merges with the depot $d_2$ (figure \ref{step4}), then $\overline{C}\bigcup \{d_2\}$ becomes inactive, and the total price $\sum_{S \subseteq \overline{C}} Y_2(S)$ serves as an upper bound for $\sum_{S \subseteq \overline{C}} Y_1(S)$.

 \item If a constraint in (\ref{eq:dualbound}) becomes tight for a component $\overline{C}$, then $\overline{C}$ is deactivated in $F_1$ (Refer to figures \ref{step5},\ref{step7}). This outcome occurs when the total price ($\sum_{S \subseteq \overline{C}} Y_1(S)$) that all the vertices in $\overline{C}$ are willing to pay to get connected to $d_1$ becomes as costly as the total price ($\sum_{S \subseteq \overline{C}} Y_2(S)$) that the same vertices have already paid to get connected to $d_2$.
\end{itemize}

  The iterative process terminates when all the components become inactive. The final step of the algorithm removes any unnecessary edges (refer to figure \ref{step8}) that are not required to be in $F_1$ or $F_2$ using a marking procedure that was previously used for the prize collecting TSP in \cite{GoemansW95}. \\

  There is a key feature to note in our primal-dual procedure. We ensure that the dual variables of all the active components in $F_1$ and $F_2$ are increased uniformly by the same amount in each iteration. The components in $F_1$ tend to merge first as compared with the components in $F_2$ due to the choice of our dual increase and the fact that it is cheaper to travel between any two targets using the first vehicle as compared with the second vehicle. As the algorithm progresses, for any $U \subseteq T$, it is likely that $\sum_{S \subseteq U} Y_1(S) < \sum_{S \subseteq U} Y_2(S)$ as there may be fewer active components of $F_1$ in $U$ as compared to $F_2$. For example, in figure \ref{step2}, there is exactly one active component of $F_1$ in $U:=\{t_2,t_3\}$ as compared to two active components of $F_2$ in $U$. Even if the edges in the forests contain a feasible solution for the HSF, we do not terminate the algorithm if there is at least one active component $\overline{C}$ of $F_1$ such that $\sum_{S \subseteq \overline{C}} Y_1(S) < \sum_{S \subseteq \overline{C}} Y_2(S)$. For example, consider the snap shot of the algorithm in figure \ref{step6}. Each target in this snap shot is either connected to $d_1$ or $d_2$ and hence, one can possibly terminate the algorithm at this step. However, we find that the component $\overline{C}:=\{t_5,t_6,t_7,t_8\}$ of $F_1$ is still active and $\sum_{S \subseteq \overline{C}} Y_1(S) < \sum_{S \subseteq \overline{C}} Y_2(S)$, $i.e.$, the total price that $\overline{C}$ has paid till this iteration to get connected to $d_1$ is less than the total price that the $\overline{C}$ has already paid to get connected to $d_2$. Therefore, the algorithm continues to increase $Y_1(\overline{C})$ to check if all the vertices in $\overline{C}$ can get connected to $d_1$ at a lower cost. This feature is useful from the point of obtaining a good approximation ratio because the cost of the edges in the HSF has to be bounded in terms of the cost of the dual solution which in turn depends only on $\sum_{S\subseteq T}Y_1(S)$. Hence, the algorithm terminates only when all the components in $F_1$ become inactive. \\

  There are other possible ways of increasing the dual variables so that the primal-dual algorithm is simpler. For example, one can increase the dual variable associated with each active component in $F_1$ by the same amount while growing the dual variable of an active component in $F_2$ at a slower rate so that the associated constraints in (\ref{eq:dualbound}) are always tight in {\it each} iteration of the algorithm. Specifically, if a dual variable $Y_1(S)$ is increased by $\epsilon$, then the dual variable of each active component of $F_2$ in $S$ can be increased by $\frac{\epsilon}{k}$ where $k$ is the number of active components of $F_2$ in $S$. Even though this type of a dual increase will result in a simpler primal-dual procedure, the active components of $F_2$ could grow at different rates. If the active components grow at different rates in a forest, as pointed out in the case of the minimum spanning tree problem \cite{GoemansW95}, one can develop instances where the approximation ratio of the algorithm may be greater than 2.

  \begin{figure}[!t]
\centering
\includegraphics[scale=0.45]{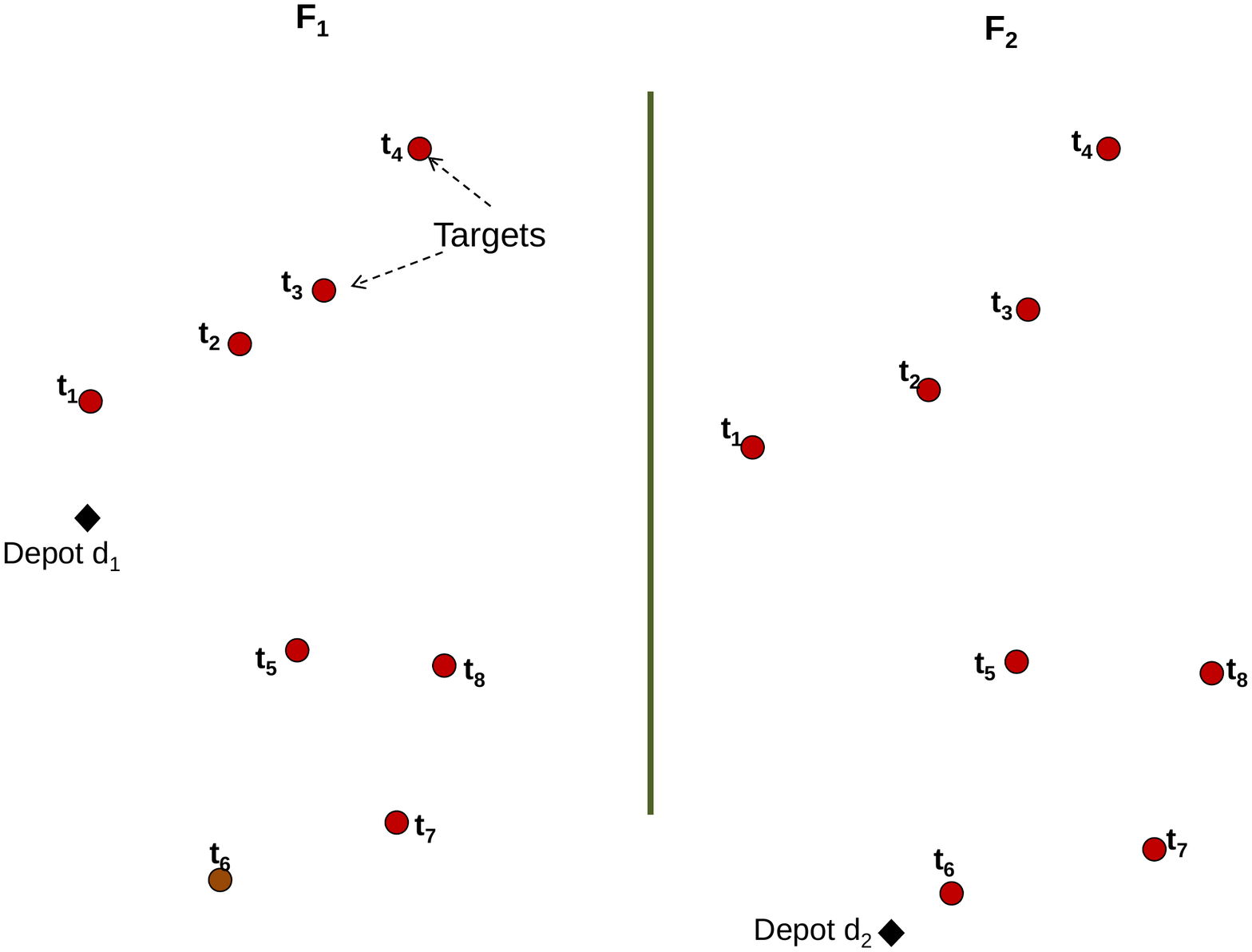}
\caption{An example illustrating the basic steps in the primal dual algorithm. There are
8 targets in this example. The forests $F_1$ and $F_2$ are initially empty. Each component that contains
a target is active. The components that contain the depots are inactive. }
\label{step1}
\end{figure}

\begin{figure}[!b]
\centering
\includegraphics[scale=0.45]{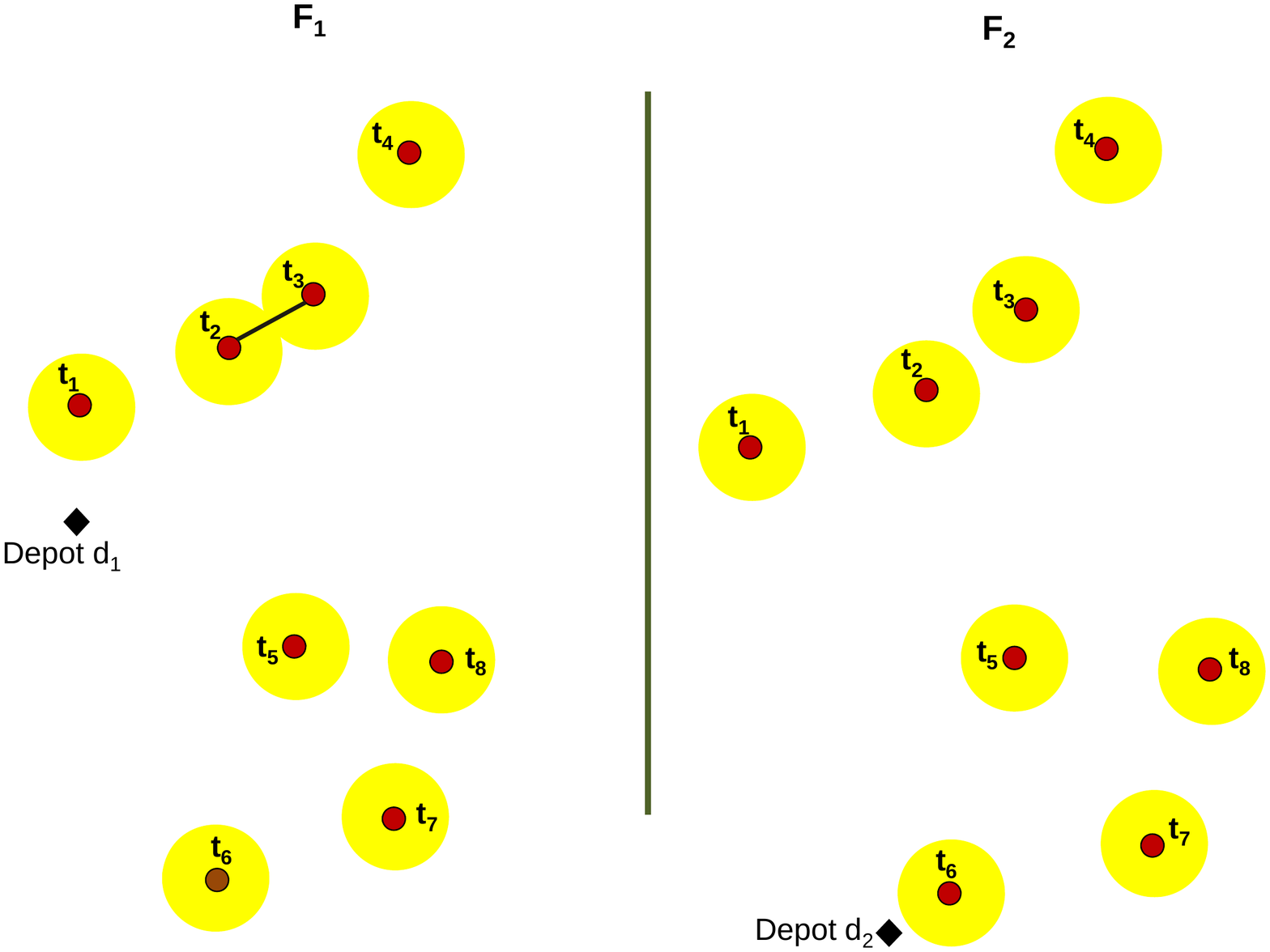}
\caption{Snap shot of the forests at the end of the first iteration. The radius of the circular region, $p_i(u):=\sum_{S: u \in S} Y_i(S) $, around a target $u$ in the forest $F_i$ is equal to the sum of the dual variables of all the components that contain $u$ in $F_i$. Edge $e:=(t_2,t_3)$ is added to $F_1$ as $p_1(t_2)+p_1(t_3)$  becomes equal to $cost^1_e$.
}
\label{step2}
\end{figure}

\begin{figure}[!t]
\centering
\includegraphics[scale=0.45]{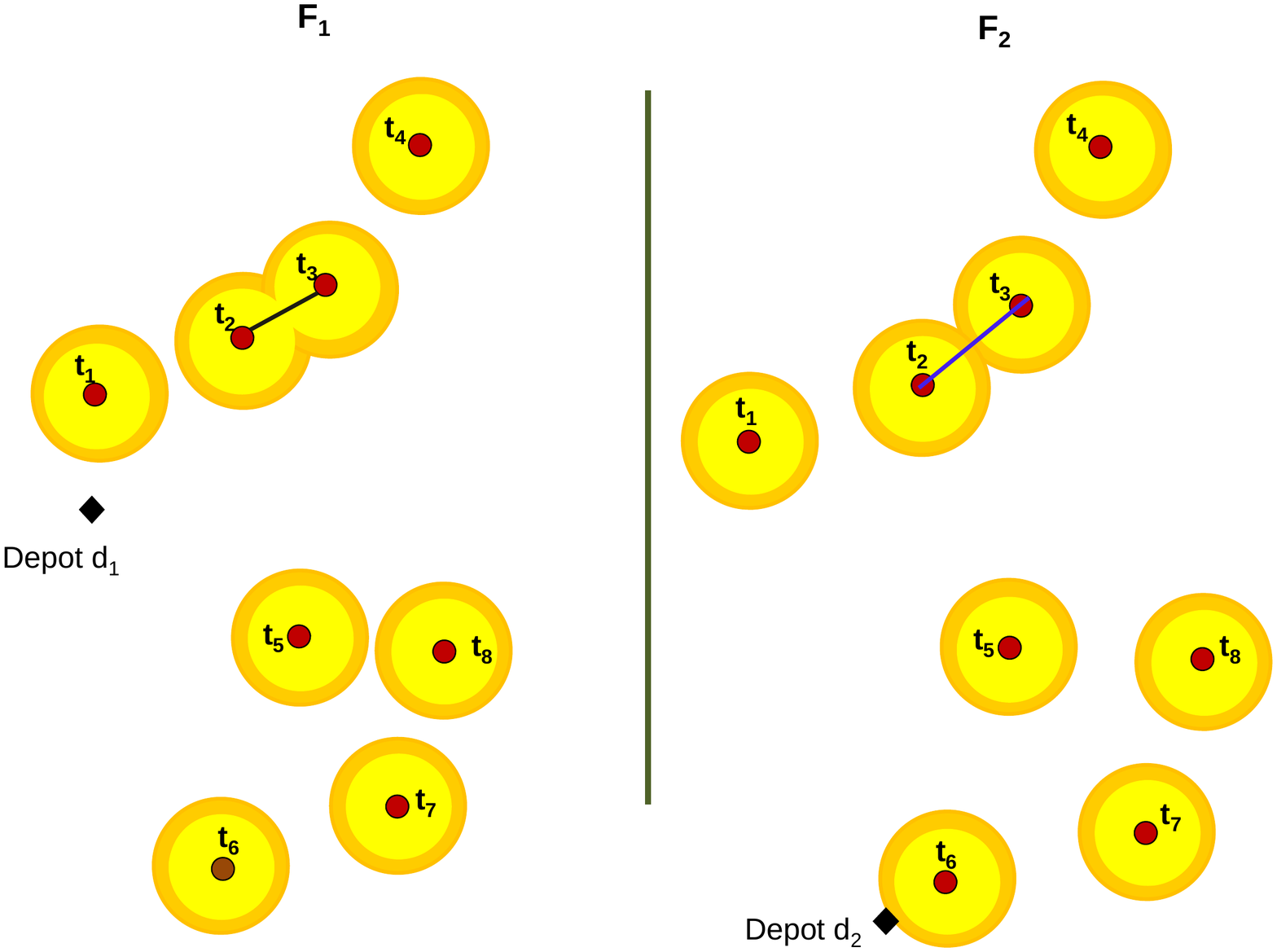}
\caption{Snap shot of the forests at the end of the second iteration. Edge ($t_2,t_3$) is added to $F_2$ as the sum of the prices paid by the components containing targets $t_2$ and $t_3$ becomes equal to the cost of constructing the edge ($t_2,t_3$) for the second vehicle.}
\label{step3}
\end{figure}

\begin{figure}[!b]
\centering
\includegraphics[scale=0.45]{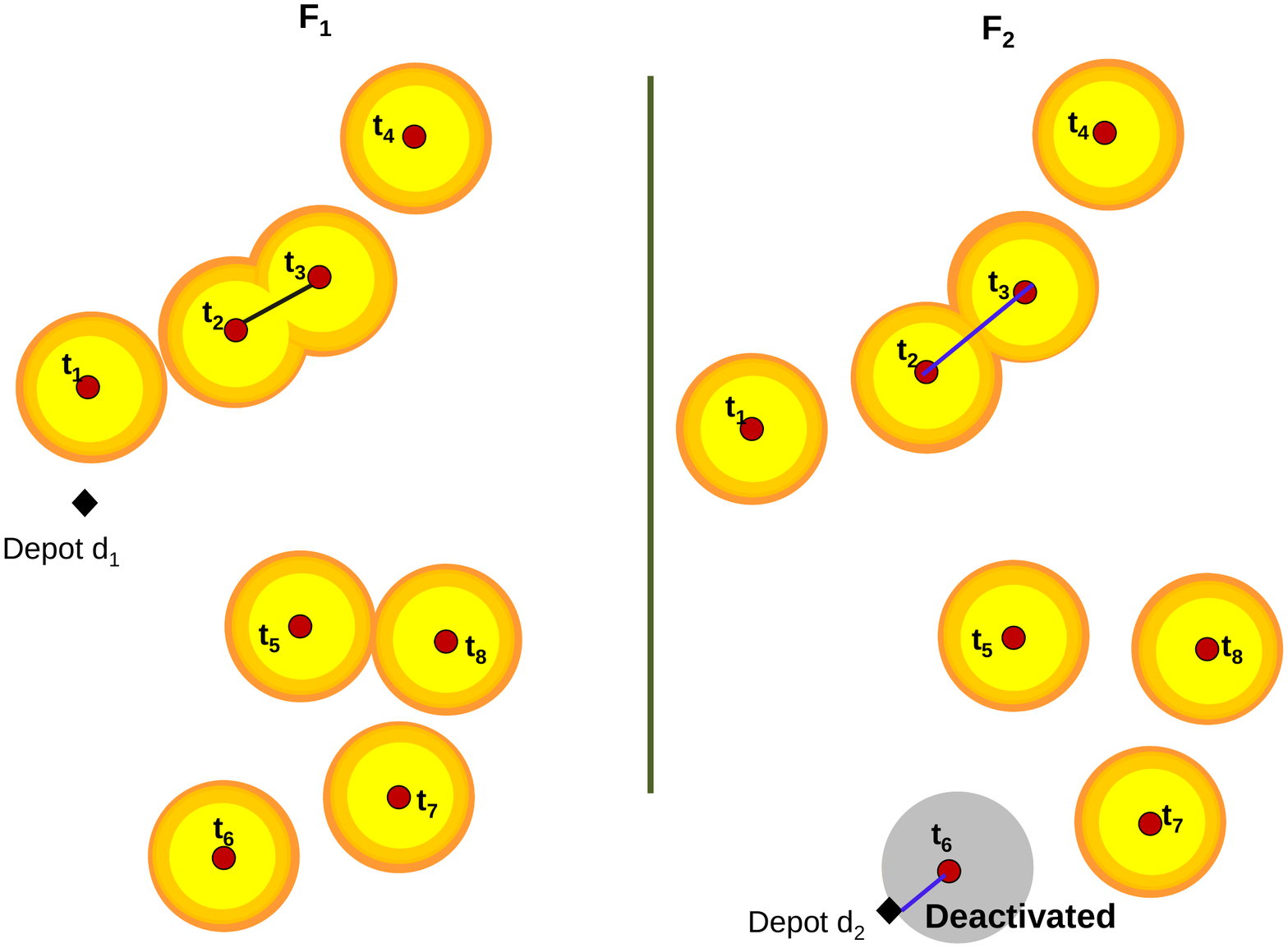}
\caption{Snap shot of the forests at the end of the third iteration. The constraint corresponding to the edge joining target $t_6$ and depot $d_2$ becomes tight. Edge ($t_6$,$d_2$) is added to $F_2$ and the merged component is deactivated as $t_6$ is now connected to $d_2$ in $F_2$. The dual variable $Y_2(\{t_6\})$ does not increase further and will serve as an upper bound on $Y_1(\{t_6\})$.}
\label{step4}
\end{figure}

\begin{figure}[!t]
\centering
\includegraphics[scale=0.44]{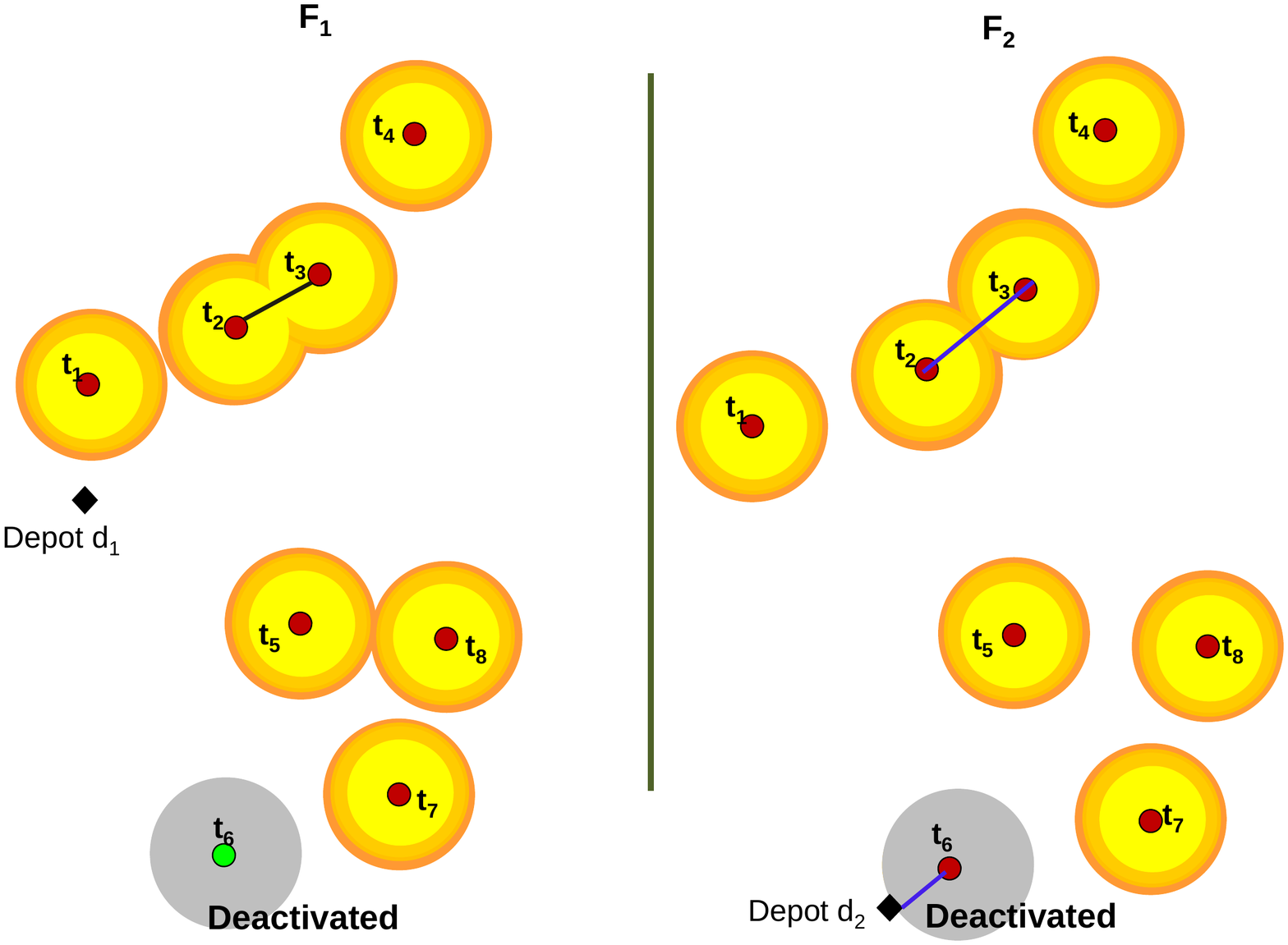}
\caption{Snap shot of the forests at the end of the fourth iteration. Component \{$t_6$\} in $F_1$ is deactivated because $Y_1(\{t_6\})$ becomes equal to $Y_2(\{t_6\})$. }
\label{step5}
\end{figure}

\begin{figure}[!b]
\centering
\includegraphics[scale=0.44]{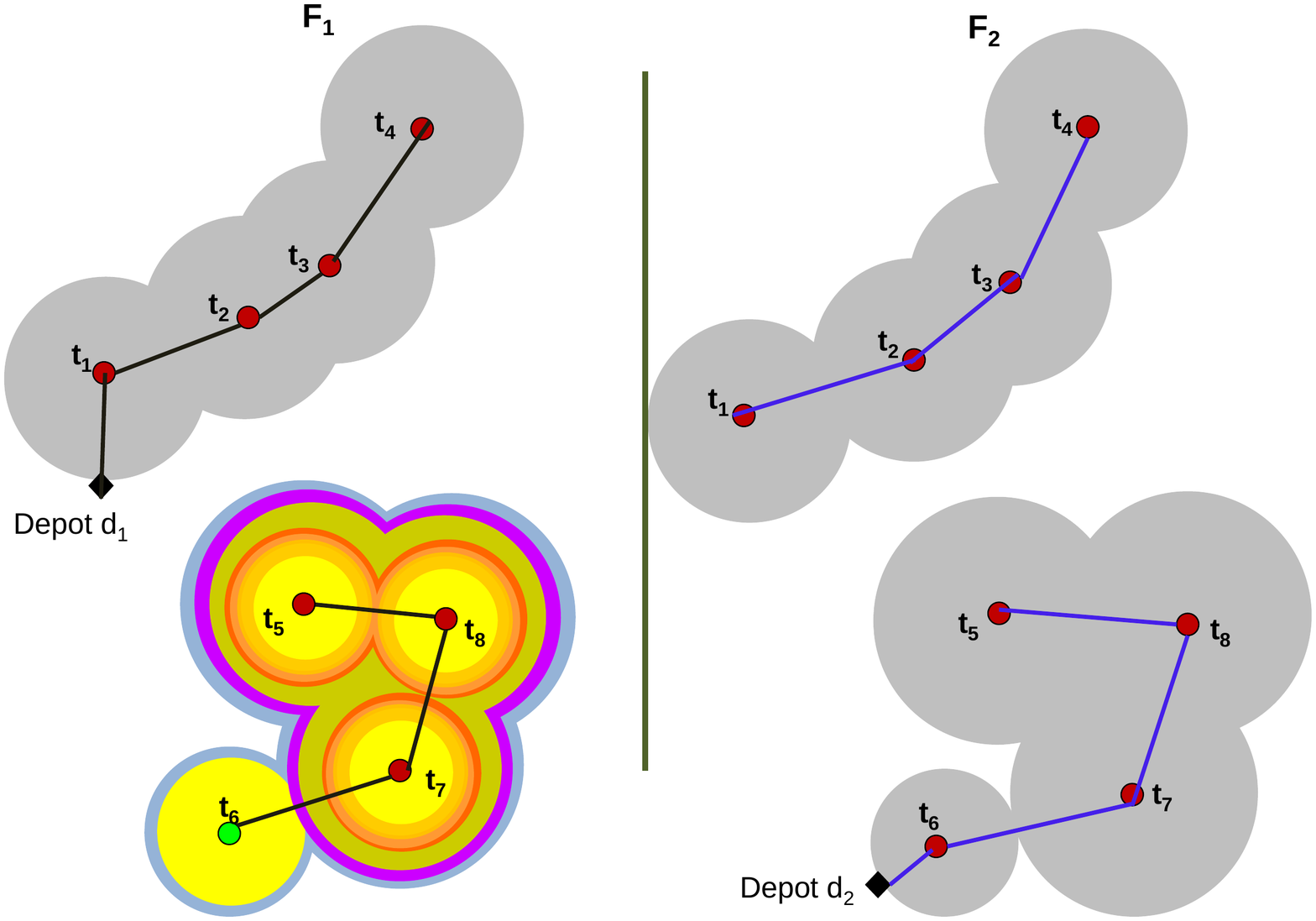}
\caption{Snap shot of the forests after few iterations of the algorithm. All the components are inactive except $\overline{C}:=\{t_5,t_6,t_7,t_8\}$ of $F_1$. Notice that all the targets are connected to one of the two depots. So, the algorithm can possibly stop if needed. However, it turns out that the total price paid by the components in $\overline{C}$ to get connected to the first depot is less than the total price the components in $\overline{C}$ have already paid for $F_2$. Therefore, $Y_1(\overline{C})$ is increased further in the next iteration to check if $\overline{C}$ can get connected to $d_1$ at a lower cost. }
\label{step6}
\end{figure}

\begin{figure}[!t]
\centering
\includegraphics[scale=0.47]{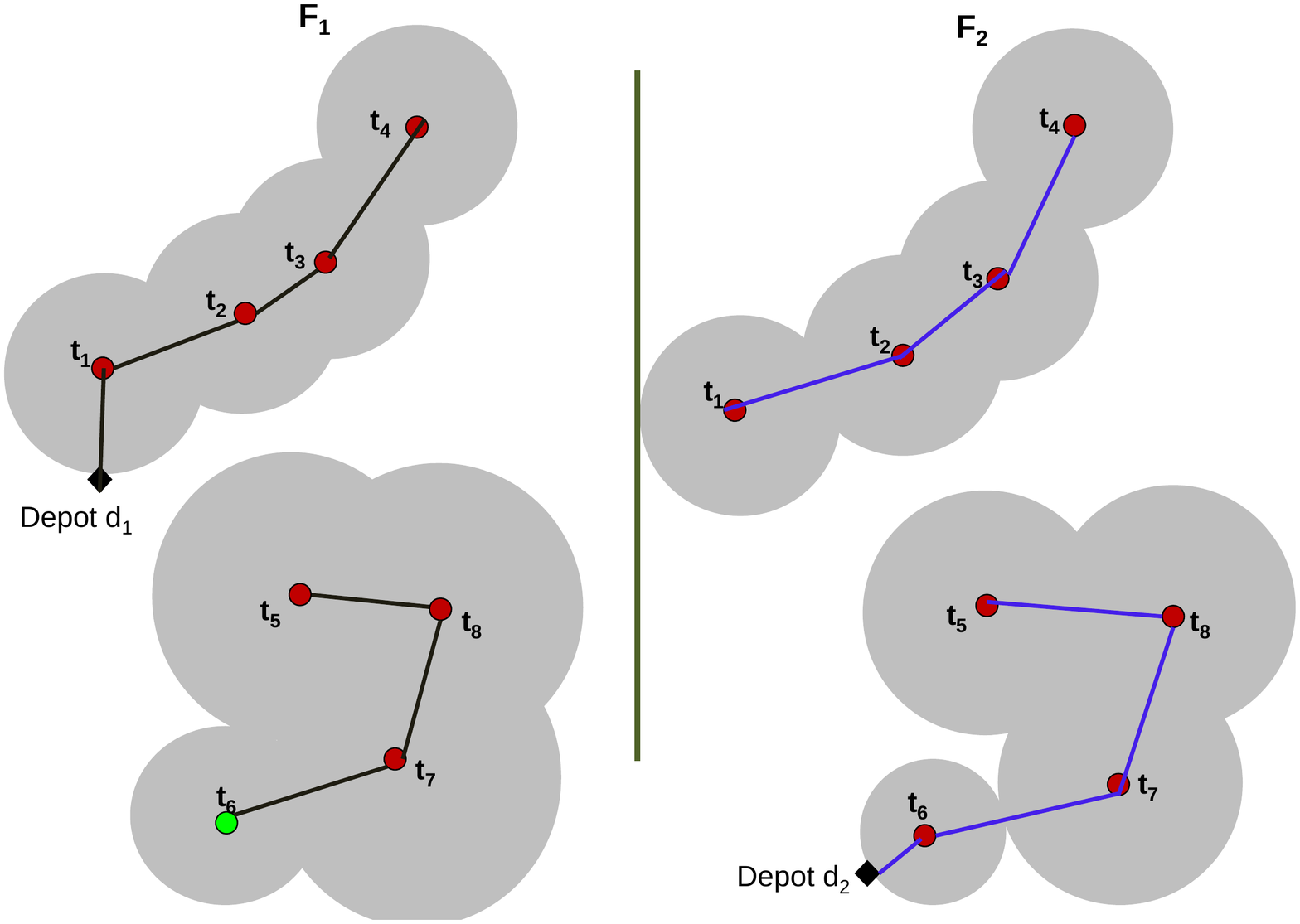}
\caption{Snap shot of the forests at the end of the main loop of the algorithm. $\overline{C}:=\{t_5,t_6,t_7,t_8\}$ of $F_1$ is deactivated because $\sum_{S \subseteq \overline{C}} Y_1(S)$ becomes equal to $\sum_{S \subseteq \overline{C}} Y_2(S)$. The main part of the algorithm terminates because all the components are now inactive.}
\label{step7}
\end{figure}

\begin{figure}[!b]
\centering
\includegraphics[scale=0.47]{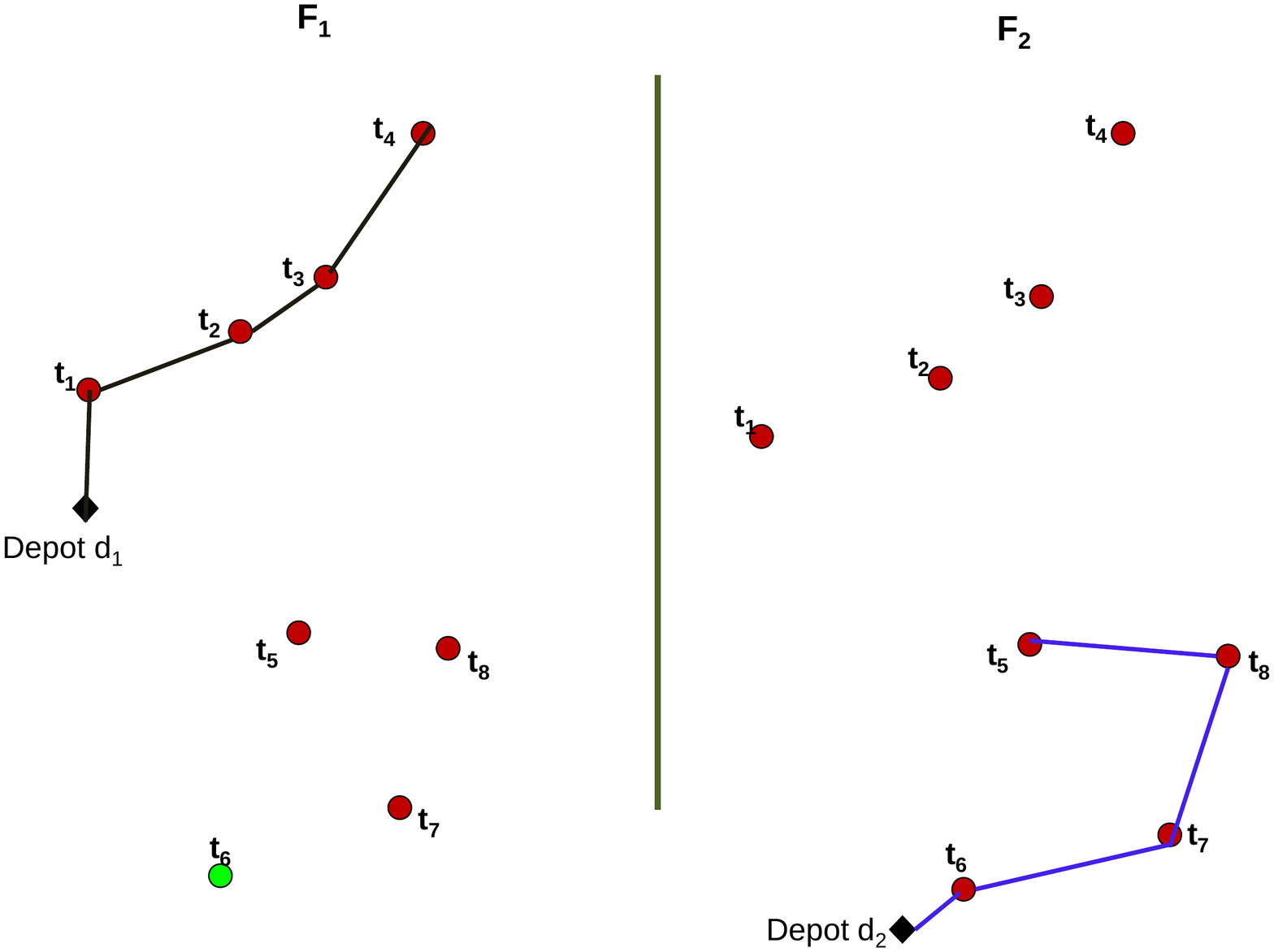}
\caption{The final output (HSF) of the primal-dual algorithm after the unnecessary edges are removed in the pruning step.}
\label{step8}
\end{figure}

\newpage

\section{Implementation details of the Primal-Dual Algorithm}

The initialization, the main steps and the final pruning step of the primal-dual algorithm are presented in {\bf Algorithms \ref{alg:initialization}} , {\bf \ref{alg:main} and {\bf \ref{alg:pruning}}}. For any $\forall C\in {\mathcal{C}}_1$, the internal variable $w(C)$ keeps track of $\sum_{S\subseteq C} Y_1(S)$, $i.e$, $w(C)=\sum_{S\subseteq C} Y_1(S)$. Similarly, $\forall C\in {\mathcal{C}}_1$, $Bound(C)$ keeps track of $\sum_{S\subseteq C} Y_2(S)$. Essentially, $w(C)$ and $Bound(C)$ are used to enforce the constraints in (\ref{eq:dualbound}). Initially, all the dual variables, $w(C)$ and $Bound(C)$ are set to zero. ({\it Refer to the initialization steps in algorithm \ref{alg:initialization}}). Also, each vertex in $V_1$ is initially unmarked. \\

As the components in ${\mathcal{C}}_1$ tend to merge first, we refer to the components in ${\mathcal{C}}_1$ as parents and the components in ${\mathcal{C}}_2$ as their children. For components $C_1\in {\mathcal{C}}_1$ and $C_2\in {\mathcal{C}}_2$, we define $C_1$ as the parent of $C_2$ and $C_2$ as a child of $C_1$ if $C_2\subseteq C_1$ and $d_2\notin C_2$. For any component $C_1 \in  {\mathcal{C}}_1$, we use $Children(C_1)$ to denote all the children of $C_1$ present in  ${\mathcal{C}}_2$. For any component $C_2 \in  {\mathcal{C}}_2, d_2\notin C_2$, we use $Parent(C_2)$ to denote the parent of $C_2$ present in ${\mathcal{C}}_1$. According to the definition, if $C_2$ contains the depot $d_2$, $C_2$ doesn't have a parent; however, to simplify the presentation, we let $Parent(C_2)$ be an empty set if $C_2$ contains $d_2$. At the start of the algorithm, for any target $v\in T$, $Children(\{v\})$ is assigned to be equal to $\{v\}$ and $Parent(\{v\})$ is assigned to be equal to $\{v\}$. Also, the components that consist of just the depots neither have a parent or a child ({\it Refer to the initialization steps in algorithm \ref{alg:initialization}}). \\

In each iteration of the algorithm, the dual variable corresponding to each of the \textit{active} components in ${\mathcal{C}}_1$ and  ${\mathcal{C}}_2$ are increased as much as possible by the \textit{same} amount until one of the constraints stated in (\ref{eq:dualfirst}-\ref{eq:dualbound}) becomes tight ({\it Refer to lines 2-5 of the algorithm \ref{alg:main}}). For any two disjoint components $C_{1x},C_{1y}\in {\mathcal{C}}_1$, consider the constraint in (\ref{eq:dualfirst}) corresponding to the edge $e=\{u,v\}$ that could potentially connect vertex $u$ in $C_{1x}$ to vertex $v$ in $C_{1y}$: $\sum_{S: e\in \delta_1(S)} Y_1(S) \leq cost^1_e$. Since $e$ has not yet been added to $F_1$, this constraint can be re-written as $\sum_{S: u\in S} Y_1(S) + \sum_{S: v \in S} Y_1(S)\leq cost^1_e$, or as $p_1(u) + p_1(v) \leq cost^1_e$. Therefore, to add an edge $(u,v)$ during the iteration, each of the dual variables of the active components have to be increased by an amount given by $\frac{cost^1_e-p_1(u)-p_1(v)}{active_1(C_{1x})+ active_1{(C_{1y})}}$ in order to make the constraint, $p_1(u) + p_1(v) \leq cost^1_e$, tight. Hence, in step 2 of the algorithm \ref{alg:main}, we basically find the minimum amount by which each of the dual variables of the active components in ${\mathcal{C}}_1$ have to be increased so that none of the constraints are violated and at least one of the constraints in (\ref{eq:dualfirst}) just becomes tight. Similarly, in step 3 of the algorithm \ref{alg:main}, we find the minimum amount by which each of the dual variables of the active components in ${\mathcal{C}}_2$ have to be increased so that none of the constraints are violated and at least one of the constraints in (\ref{eq:dualsecond}) just becomes tight. For $i=1,2$, note that $p_i(u)$ is increased during an iteration only if $u$ belongs to a component in ${\mathcal{C}}_i$ that is active; else $p_i(u)$ does not change.\\

If a constraint in (\ref{eq:dualfirst}) becomes tight for some edge $e \in E_1$, $F_1$ is augmented with this new edge and the two components (say $C_{1x},C_{1y}$ in $C_1$) connected by $e$ are merged to form a single connected component. The children of each of the two components $C_{1x},C_{1y}$ now together become the children of the resulting component $C_{1x} \bigcup C_{1y}$. The resulting component becomes inactive if it contains the depot $d_1$; otherwise, it is active. In the case when the resulting component becomes inactive, all the children of the resulting component also become inactive ({\it Refer to lines 17-27 of the algorithm \ref{alg:main}}). \\

Similarly, if one of constraints in (\ref{eq:dualsecond}) becomes tight for some edge $e \in E_2$, $F_2$ is augmented with this new edge and the two components (say $C_{2x},C_{2y}$ in $C_2$) connected by $e$ are merged to form a single connected component ({\it Refer to lines 29-39 of the algorithm \ref{alg:main}}). The resulting component becomes inactive if it contains the depot $d_2$; otherwise, it is active. In the case when the resulting component is active, the parent of either $C_{2x}$ or $C_{2y}$ is assigned as the parent of the resulting component (It turns out that due to our assumptions on the costs, when the algorithm enters this part of the implementation, both $C_{2x}$ and $C_{2y}$ must be active and must be the children of the same parent; we will show this result later in lemma \ref{lemma:activesets}).  In the case when the resulting component becomes inactive, and say $C_{2x}$ was the active component during the iteration which \textit{did not} contain the depot, the parent of $C_{2x}$ loses $C_{2x}$ as its child. \\ %Also, any parent becomes inactive if it loses all its children, $i.e.$, if all its children get connected to the depot $d_2$.

Once an active parent $\overline{C}$ loses all its children, $Bound(\overline{C})$ specifies the maximum value that can be attained by $w(\overline{C})$. Suppose an active component $\overline{C}\in {\mathcal{C}}_1$ does not have any children and the increase in the dual variables results in the constraint $w(\overline{C})\leq Bound(\overline{C})$ becoming tight. Then, the algorithm deactivates $\overline{C}$ and marks each of the unmarked vertices in the component with $\overline{C}$ ({\it Refer to lines 41-42 of the algorithm \ref{alg:main}}). \\

 The algorithm terminates when all the components in ${\mathcal{C}}_1$ become inactive. After termination, the algorithm makes one final pass at all the edges ({\it refer to algorithm \ref{alg:pruning}}) and removes any edge that is not required to be in the HSF. Basically, during the final step of the primal dual algorithm, any unnecessary edges in $F_1$ and $F_2$ are pruned further to find a tree for each of the vehicles. Specifically, the tree $F'_1$ corresponding to the first vehicle  is obtained from $F_1$ by removing as many edges as possible from $F_1$ so that the following properties hold: 1) All the unmarked vertices of $V_1$ are connected to the first depot $d_1$; 2) If any vertex with label $C$ is connected to the depot $d_1$, then any other vertex with a label $C'\supseteq C$ is also connected to the depot $d_1$. The tree $F'_2$ corresponding to the second vehicle is obtained from $F_2$ by removing as many edges as possible from $F_2$ such that any target not spanned by $F'_1$ is connected to $d_2$ in $F'_2$. \\

  Since the sum of the number of components in ${\mathcal{C}}_1$, the number of active components in ${\mathcal{C}}_1$ and the number of components in ${\mathcal{C}}_2$ decreases at least by one during each iteration, the primal-dual algorithm must terminate after at most $3|T|+2$ iterations.  Using the techniques given in \cite{GoemansW95}, this primal-dual algorithm can be implemented in $|T|^2 \log{|T|}$ steps.\\

\renewcommand{\baselinestretch}{1}
\begin{algorithm}
\begin{algorithmic}[0]
\STATE $F_1 \leftarrow \emptyset$;~~ $F_2 \leftarrow \emptyset$; ~~${\mathcal{C}}_1 \leftarrow \{ \{v\}: v \in V_1\}$;~~ ${\mathcal{C}}_2 \leftarrow \{ \{v\}: v \in V_2\}$
%\STATE $y^1_S \leftarrow 0$ for all $S \subset V_1$; $y^2_S \leftarrow 0 $ for all $S \subset V_2$
\FOR{$v \in V_1$}
\STATE Unmark $v$;~~ $p_1(v) \leftarrow 0$; ~~ $w(\{v\}) \leftarrow 0$; ~~ $Bound(\{v\}) \leftarrow 0$
\STATE If $v=d_1$, then $Children(\{v\}) \leftarrow \emptyset$, else $Children(\{v\}) \leftarrow \{v\}$
\STATE If $v=d_1$, then $active_1(\{v\}) = 0$, else $active_1(\{v\}) = 1$
\ENDFOR
\FOR{$v \in V_2$}
\STATE $p_2(v) \leftarrow 0$
%\STATE $Y_2(\{v\}) \leftarrow 0$
\STATE If $v=d_2$, then $Parent(\{v\}) \leftarrow \emptyset$, else $Parent(\{v\}) \leftarrow \{v\}$
\STATE If $v=d_2$, then $active_2(\{v\}) = 0$, else $active_2(\{v\}) = 1$
\ENDFOR
\end{algorithmic}
\caption{{\it Primal-dual algorithm: Initialization}}
\label{alg:initialization}
\end{algorithm}
\renewcommand{\baselinestretch}{1.5}

\renewcommand{\baselinestretch}{1}

\begin{algorithm}
\begin{algorithmic}[1]
{
\WHILE{$\exists C \in {\mathcal{C}}_1$ such that $active_1(C)=1$}
\STATE Find edge $e_1 = (i,j) \in E_1$ with $i\in C_{1x},j\in C_{1y}$ where $C_{1x},C_{1y} \in {\mathcal{C}}_1, C_{1x} \neq C_{1y}$ that minimizes
    $\varepsilon_1 = \frac{(cost^1_{e_1}-p_1(i)-p_1(j))}{active_1(C_{1x}) + active_1(C_{1y})}$ \label{alg_addedge1}
\STATE Find edge $e_2 = (i,j) \in E_2$ with $i\in C_{2x},j\in C_{2y}$ where $C_{2x},C_{2y} \in {\mathcal{C}}_2, C_{2x} \neq C_{2y}$ that minimizes
    $\varepsilon_2 = \frac{(cost^2_{e_2}-p_2(i)-p_2(j))}{active_2(C_{2x}) + active_2(C_{2y})}$
\STATE Let ${\mathcal{\mathfrak{C}}} := \{C: active_1(C)=1, Children(C)= \emptyset, C\in {\mathcal{C}}_1\}$. Find $\overline{C}\in {\mathfrak{{C}}}$ that minimizes $\varepsilon_3 = Bound(\overline{C})-w(\overline{C})$
 %   \vspace{0.3cm}
    \STATE $\varepsilon_{min}=\min(\varepsilon_1,\varepsilon_2,\varepsilon_3)$
%    \vspace{0.3cm}
    \FOR{each active component $C \in {\mathcal{C}}_1$}
    \STATE $w(C) \leftarrow w(C) + \varepsilon_{min}$
    \STATE For all $v\in C$,  $p_1(v) \leftarrow p_1(v) + \varepsilon_{min}$
    \STATE $Bound(C) \leftarrow Bound(C) + \varepsilon_{min} |Children(C)|$
    \ENDFOR
    \FOR{each active component $C \in {\mathcal{C}}_2$}
   % \STATE $Y_2(C) \leftarrow Y_2(C) + \varepsilon_{min}\times active_2(C)$
    \STATE  For all $v\in C$,  $p_2(v) \leftarrow p_2(v) + \varepsilon_{min}$
    \ENDFOR
   % \FOR{$C \in {\mathcal{C}}_1$}
   %     \STATE $Bound(C) \leftarrow Bound(C) + \sum_{\widehat{C} \in Children(C)} \varepsilon_{min}$ %\times active_2(\widehat{C})$
   % \ENDFOR

 %   \vspace{0.3cm}
    \STATE \textbf{switch} $\varepsilon_{min}$
    \STATE //\verb"Comment: If more than one value in" $\{\varepsilon_1,$$\varepsilon_2,$$\varepsilon_3\}$ \verb"is equal" \verb"to" $\varepsilon_{min}$, \verb"then give priority" \verb"first to" {\it Case} $\varepsilon_1$, \verb"then to" {\it Case} $\varepsilon_2$ \verb"and finally to" {\it Case} $\varepsilon_3$ \vspace{0.3cm}
    \STATE \textit{Case} $\varepsilon_1$:
     \STATE  \hspace{0.3cm} $F_1 \leftarrow F_1 \bigcup \{e_1\}$ \label{alg:Case1Starts}
    \STATE  \hspace{0.3cm}  ${\mathcal{C}}_1 \leftarrow {\mathcal{C}}_1 \bigcup \{C_{1x}\bigcup C_{1y}\} - C_{1x} - C_{1y} $
  %  \STATE  \hspace{0.3cm}  $Y_1(C_{1x}\bigcup C_{1y}) \leftarrow Y_1(C_{1x}) + Y_1(C_{1y})$
     \STATE  \hspace{0.3cm}  $w(C_{1x}\bigcup C_{1y}) \leftarrow w(C_{1x}) + w(C_{1y})$
     \STATE  \hspace{0.3cm}  $Children(C_{1x}\bigcup C_{1y}) \leftarrow Children(C_{1x})\bigcup Children(C_{1y}) $
    \STATE \hspace{0.3cm}  For all $C \in Children(C_{1x}\bigcup C_{1y})$, $Parent(C) \leftarrow C_{1x}\bigcup C_{1y}$
    \STATE \hspace{0.3cm} $Bound(C_{1x}\bigcup C_{1y})\leftarrow Bound(C_{1x}) + Bound(C_{1y})$
    \STATE  \hspace{0.3cm}  \textbf{if }$d_1 \in C_{1x}\bigcup C_{1y}$,\textbf{ then}
    \STATE  \hspace{0.3cm}    $active_1(C_{1x}\bigcup C_{1y})=0$
    \STATE  \hspace{0.3cm}   $active_2(C) = 0$ for all $C \in Children(C_{1x}\bigcup C_{1y})$
   \STATE  \hspace{0.3cm}    \textbf{else }$active_1(C_{1x}\bigcup C_{1y})=1$
   \STATE  \hspace{0.3cm}     \textbf{end}

    \STATE \textit{Case} $\varepsilon_2$:
         \STATE  \hspace{0.3cm} $F_2 \leftarrow F_2 \bigcup \{e_2\}$
    \STATE  \hspace{0.3cm}  ${\mathcal{C}}_2 \leftarrow {\mathcal{C}}_2 \bigcup \{C_{2x}\bigcup C_{2y}\} - C_{2x} - C_{2y} $
%    \STATE  \hspace{0.3cm}  $Y_2(C_{2x}\bigcup C_{2y}) \leftarrow Y_2(C_{2x}) + Y_2(C_{2y})$
        \STATE \hspace{0.3cm}\textbf{ if }{$d_2 \in C_{2x} \bigcup C_{2y}$ }\textbf{ then}
        \STATE \hspace{0.6cm} $active_2(C_{2x}\bigcup C_{2y})\leftarrow 0$
        \STATE \hspace{0.6cm} $Parent(C_{2x}\bigcup C_{2y}) \leftarrow \emptyset$
    \STATE \hspace{0.6cm} Let $C \in \{C_{2x},C_{2y}\}$ such that $d_2 \notin C$; $Children(Parent(C)) \leftarrow Children(Parent(C)) - C$
  %  \STATE \hspace{0.6cm} If $Children(Parent(C)) = \emptyset$, then $active_1(Parent(C)) \leftarrow 0$
    \STATE \hspace{0.3cm} \textbf{else} $active_2(C_{2x}\bigcup C_{2y})\leftarrow 1$
               \STATE \hspace{0.6cm}   $C_{temp} \leftarrow Parent(C_{2x})$
    \STATE \hspace{0.6cm}  $Parent(C_{2x}\bigcup C_{2y}) \leftarrow C_{temp}$
            \STATE \hspace{0.6cm}  $Children(C_{temp})\leftarrow Children(C_{temp})\bigcup \{C_{2x}\bigcup C_{2y}\} - C_{2x} - C_{2y} $
    \STATE \hspace{0.3cm} \textbf{ end if}
    \STATE \textit{Case} $\varepsilon_3$:
    \STATE  \hspace{0.3cm} $active_1(\overline{C}) \leftarrow 0$
    \STATE \hspace{.3cm} Mark all the unlabeled vertices of $\overline{C}$ with label $\overline{C}$
    \STATE \textbf{end switch}

\ENDWHILE
}
\end{algorithmic}
\caption{: \textit{Primal-dual algorithm - Main steps}}
\label{alg:main}
\end{algorithm}

\begin{algorithm}
\begin{algorithmic}[1]
{
\STATE $F'_1$ is obtained from $F_1$ by removing as many edges as possible from $F_1$ so that the following properties hold: 1) All the unmarked vertices of $V_1$ are connected to the first depot $d_1$; 2) If any vertex with label $C$ is connected to the depot $d_1$, then any other vertex with a label $C'\supseteq C$ is also connected to the depot $d_1$.
\STATE $F'_2$ is obtained from $F_2$ by removing as many edges as possible from $F_2$ such that any target not spanned by $F'_1$ is spanned by $F'_2$.
}
\end{algorithmic}
\caption{: \textit{Primal-dual algorithm - Pruning step}}
\label{alg:pruning}
\end{algorithm}

\renewcommand{\baselinestretch}{1.5}

\newpage

 \subsection{Properties of the primal-dual algorithm}
 \vspace{0.3cm}

Consider any target $u\in T$. At the start of the $k^{th}$ iteration, let $C^k_1(u)$ denote the component in ${\mathcal{C}}_1$ containing $u$, and $C^k_2(u)$ represent the component in ${\mathcal{C}}_2$ containing $u$.\\

\begin{lemma}\label{lemma:activesets}
The following statements are true for all $k$:\newline
\begin{enumerate}
\item $C^k_2(u)$ is always a child of $C^k_1(u)$, $i.e.$, $C^k_2(u) \subseteq C^k_1(u)$ unless $C^k_2(u)$ contains the depot $d_2$.
\item $active_1(C^{k}_1(u))\geq active_2(C^{k}_2(u))$. \\
\end{enumerate}
 \end{lemma}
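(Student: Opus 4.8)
The plan is to prove both statements simultaneously by induction on the iteration index $k$, since the two properties reinforce each other and, together with the cost assumption, are exactly what is needed to justify the parent--child bookkeeping in Case $\varepsilon_2$ of Algorithm \ref{alg:main}. Throughout I suppress the index $k$ when the iteration is clear and write $C_1(v)$, $C_2(v)$ for the components of $\mathcal{C}_1$, $\mathcal{C}_2$ containing a vertex $v$ at the start of the iteration under consideration. For the base case $k=1$ every vertex sits in its own component, so for a target $u$ we have $C_1(u)=C_2(u)=\{u\}$ with $d_2\notin C_2(u)$; statement 1 holds since $\{u\}\subseteq\{u\}$, and statement 2 holds since both singletons are active. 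The bulk of the work is the inductive step: assuming both statements hold at the start of iteration $k$, I would show they persist at the start of iteration $k+1$ by examining the three mutually exclusive updates triggered by $\varepsilon_{min}\in\{\varepsilon_1,\varepsilon_2,\varepsilon_3\}$.

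Before the case analysis I would record one auxiliary invariant that follows directly from statement 2: for every vertex $v$ one has $p_1(v)\ge p_2(v)$ at all times. Indeed $p_1(v)$ and $p_2(v)$ both start at $0$ and are each incremented by the common amount $\varepsilon_{min}$ precisely when $C_1(v)$, respectively $C_2(v)$, is active, and statement 2 guarantees that whenever $p_2(v)$ grows so does $p_1(v)$. Cases $\varepsilon_1$ and $\varepsilon_3$ are then comparatively routine. In Case $\varepsilon_1$ two components of $\mathcal{C}_1$ merge, so the parent of each affected target only grows; the containment $C_2(u)\subseteq C_1(u)$ is therefore preserved, and if the merged component absorbs $d_1$ it is deactivated together with all of its children (lines 24--25), which keeps statement 2 intact while $\mathcal{C}_2$ is untouched. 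In Case $\varepsilon_3$ an active, childless component $\overline{C}\in\mathcal{C}_1$ is deactivated; statement 1 forces every target $u\in\overline{C}$ to have $C_2(u)$ either a child of $\overline{C}$ or a set containing $d_2$, and since $\overline{C}$ has no children the former is impossible, so $C_2(u)$ contains $d_2$ and is already inactive, preserving statement 2.

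The main obstacle, and the heart of the lemma, is Case $\varepsilon_2$, where an edge $e_2=(i,j)$ is added to $F_2$ merging $C_{2x}\ni i$ and $C_{2y}\ni j$; here I must prove the claim foreshadowed in the algorithm, namely that when $d_2\notin C_{2x}\cup C_{2y}$ both components are active and share a common parent. At the instant $e_2$ becomes tight we have $p_2(i)+p_2(j)=cost^2_{e_2}$, so by the invariant $p_1\ge p_2$, dual feasibility (\ref{eq:dualfirst}) for the same edge viewed in $E_1$, and the cost assumption $cost^1_{e_2}\le cost^2_{e_2}$, we obtain $cost^1_{e_2}\ge p_1(i)+p_1(j)\ge p_2(i)+p_2(j)=cost^2_{e_2}\ge cost^1_{e_2}$, which collapses to equalities; in particular the edge $(i,j)$ is simultaneously tight in $E_1$. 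If $i$ and $j$ lay in distinct components of $\mathcal{C}_1$, this tight $E_1$-edge would be an eligible Case $\varepsilon_1$ move with $\varepsilon_1=\varepsilon_{min}$, and the stated priority of Case $\varepsilon_1$ over Case $\varepsilon_2$ would force that move instead, a contradiction; hence $C_1(i)=C_1(j)=:P$, giving the common parent, and $C_{2x},C_{2y}\subseteq P$ by statement 1. Finally, if either child were inactive while not containing $d_2$, it could only have been deactivated when its parent $P$ absorbed $d_1$, making $active_1(P)=0$ and hence, by statement 2, the other child inactive as well; but then the denominator $active_2(C_{2x})+active_2(C_{2y})$ of $\varepsilon_2$ would vanish and Case $\varepsilon_2$ could not fire. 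Thus both children are active. With this claim, statement 1 is preserved because $C_2(u)=C_{2x}\cup C_{2y}\subseteq P=C_1(u)$, and statement 2 is preserved because the merged child is active while $P$, active by statement 2 applied to $i$, stays active as $\mathcal{C}_1$ is untouched. I expect the delicate points to be the simultaneity--priority argument that rules out a split parent and the bookkeeping that a $\mathcal{C}_2$-component lacking $d_2$ can become inactive only through its parent; I would isolate both as small observations to keep the induction clean.
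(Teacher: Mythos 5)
Your proposal is correct and follows essentially the same route as the paper's own proof: a simultaneous induction on the iteration index, the derived invariant $p_1(v)\ge p_2(v)$, and the key observation that, by the cost assumption, an edge between two distinct $\mathcal{C}_1$-components becomes tight no later than the same edge in $E_2$, so the tie-breaking priority of Case $\varepsilon_1$ forces parents to merge before their children (which is also what rules out merges of children with distinct parents or with inactive siblings). The paper packages this last step as the rate inequality $\varepsilon_1\le\varepsilon_2$ in its equation (\ref{varepsilon}) and organizes the cases by what happens to $C^k_1(u)$ and $C^k_2(u)$, whereas you organize by which of $\varepsilon_1,\varepsilon_2,\varepsilon_3$ fires and use a collapsed chain of equalities at the moment of tightness, but the substance is identical.
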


 \begin{proof}
 Let us prove this lemma by induction. At the start of the first iteration, $C^1_1(u)=C^1_2(u)=\{u\}$ and the components $C^1_1(u)$, $C^1_2(u)$ are both active. Therefore, lemma 1.1 and lemma 1.2 are correct for $k=1$. Now, let us assume that the statements in the lemma are true for the $l^{th}$ iteration for any $l=1,\cdots,k$. As $active_1(C^l_1(u))\geq active_2(C^l_2(u))$ for any $l=1\cdots,k$, it follows that $p_1(u)\geq p_2(u)$ at the start of the $k^{th}$ iteration.

\textit{Proof of lemma 1.1:}
During the $k^{th}$ iteration, there are three possible cases for the components $C^k_1(u)$ and $C^k_2(u)$: 1) $C^k_1(u)$ merges with another component in ${\mathcal{C}}_1$, or, 2) $C^k_2(u)$ merges with another component in ${\mathcal{C}}_2$, or, 3) $C^k_1(u)$ gets deactivated because its corresponding constraint in (\ref{eq:dualbound}) becomes tight. It is easy to note that $C^{k+1}_2(u)$ will remain a child of $C^{k+1}_1(u)$ in the first case. $C^k_1(u)$ can get deactivated as in the third case only when $C^k_1(u)$ does not have any children, $i.e.$, $C^k_2(u)$ already contains $d_2$. Therefore, lemma 1.1 is true by default in the third case.

Let us now examine the second case. If $C^k_2(u)$ is active and merges with a component that contains the depot $d_2$, then lemma 1.1 is true for $l=k+1$ by default. If $C^k_2(u)$ is active and merges with another active component $C^k_2(v)$ corresponding to target $v$, we claim that both $C^k_2(u)$ and $C^k_2(v)$ must have the same parent. If this is not true, note that  \begin{equation}\label{varepsilon}
\varepsilon_1= \frac{cost^1_{(u,v)}-p_1(u)-p_1(v)}{active_1(C^k_1(u)) + active_1(C^k_1(v))} \leq  \frac{cost^2_{(u,v)}-p_2(u)-p_2(v)}{active_2(C^k_2(u)) + active_2(C^k_2(v))} =\varepsilon_2.
\end{equation}
Therefore, the algorithm \ref{alg:main} will not merge $C^k_2(u)$ and $C^k_2(v)$ unless it merges the parents of $C^k_2(u)$ and $C^k_2(v)$. If $C^k_2(u)$ and $C^k_2(v)$ have the same parent, it then follows that the merged component $C^{k+1}_2(u)$ will be a child of $C^{k+1}_1(u)$.

If $C^k_2(u)$ is inactive because its parent contains $d_1$, we claim that $C^k_2(u)$ will never merge with any other component. If this claim is not true and say $C^k_2(u)$ (which is inactive) merges with some other component $C^k_2(v)$ corresponding to target $v$, then $C^k_1(u) \neq C^k_1(v)$ and $C^k_2(v)$ must be active. Again from equation (\ref{varepsilon}), the algorithm will prefer to merge $C^k_1(u)$ and $C^k_1(v)$ before merging their children, $i.e.$, $C^k_2(u)$ and $C^k_2(v)$. But, once $C^k_1(u)$ and $C^k_1(v)$ are merged, the component $C^k_2(v)$ becomes a child of $C^k_1(u)\bigcup C^k_1(v)$ and as a result will be deactivated. Therefore, $C^k_2(u)$ will remain inactive and will never merge with any other component during the $k^{th}$ iteration. Hence, lemma 1.1 is true by default.

\textit{Proof of lemma 1.2:}

 If {\bf $C^k_2(u)$ is inactive}, either $C^k_2(u)$ must contain the depot $d_2$ or its parent $C^k_1(u)$ must contain the depot $d_1$.
 \begin{itemize}
 \item If $C^k_2(u)$ already contains $d_2$, then $C^{k+1}_2(u)$ must also be inactive. Therefore, $active_1(C^{k+1}_1(u))\geq active_2(C^{k+1}_2(u))=0$.
 \item If $C^k_2(u)$ is inactive because its parent $C^k_1(u)$ contains $d_1$, then we have already shown in lemma 1.1 that $C^k_2(u)$ can never merge with any other component during the $k^{th}$ iteration. Therefore, $active_1(C^{k+1}_1(u))\geq active_2(C^{k+1}_2(u))$.\\
 \end{itemize}

 If \textbf{${C^k_2(u)}$ is active}, then $active_1(C^k_1(u))\geq active_2(C^k_2(u))$ implies that $C^k_1(u)$ is also active. From lemma 1.1 it follows that $C^k_1(u)$ is a parent of $C^k_2(u)$. Since the component, $C^k_1(u)$, has at least one active child in $C^k_2(u)$, $C^k_1(u)$ can never become inactive due to its associated constraint in (\ref{eq:dualbound}) during the $k^{th}$ iteration. The only way $C^{k}_1(u)$ can lead to an inactive $C^{k+1}_1(u)$ is if $C^k_1(u)$ merges with another component containing $d_1$ during the iteration in which case all the children of $C^k_1(u)$ including $C^k_2(u)$ also get deactivated. Therefore, $active_1(C^{k+1}_1(u))\geq active_2(C^{k+1}_2(u))$.

 \end{proof}

Let $\mathfrak{X}$ denote the set of vertices not spanned by $F'_1$. Based on the label of each vertex in $\mathfrak{X}$, $\mathfrak{X}$ can be partitioned into disjoint, deactivated components $\overline{C}_1, \overline{C}_2,\cdots, \overline{C}_m $ where each $\overline{C}_i$ denotes the maximal label of its respective component. The following lemma shows that the primal-dual algorithm produces a feasible solution in which each target is connected to exactly one depot.

\begin{lemma}\label{lemma:feasibility}
  The algorithm produces a feasible, heterogeneous spanning forest, $i.e.$, the trees specified by the collection of edges in $F'_1$ and $F'_2$ connect each of the targets to one of the depots. Any vertex spanned by the edges in $F'_1$ is not spanned by the edges in $F'_2$ and vice versa.
\end{lemma}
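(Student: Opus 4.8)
The plan is to verify that the two pruning operations in Algorithm~\ref{alg:pruning} are well defined, i.e.\ that the connectivity they demand is already present in the forests $F_1$ and $F_2$ produced by the main loop, and then to argue that the resulting trees span disjoint target sets. Concretely, I would reduce the statement to three claims: (i) every \emph{unmarked} vertex of $V_1$ lies in the connected component of $d_1$ in $F_1$; (ii) every target in $\mathfrak{X}$ lies in the connected component of $d_2$ in $F_2$; and (iii) the target sets spanned by $F'_1$ and $F'_2$ are disjoint. Given (i) and (ii), property~1) of the pruning of $F_1$ and the defining property of the pruning of $F_2$ are simultaneously achievable, so $F'_1$ connects all unmarked vertices to $d_1$ and $F'_2$ connects all of $\mathfrak{X}$ to $d_2$; since $\mathfrak{X}$ is by definition the set of targets left unspanned by $F'_1$, the two trees together cover every target.

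For (i), I would use the termination condition together with the marking rule of Case $\varepsilon_3$. A vertex receives a label only when the component containing it is deactivated in Case $\varepsilon_3$, at which instant \emph{every} currently unlabeled vertex of that component is marked. Hence if $u$ is still unmarked at termination, the component $C^k_1(u)$ was never deactivated through Case $\varepsilon_3$. The loop of Algorithm~\ref{alg:main} halts only when no component of $\mathcal{C}_1$ is active, and a component of $\mathcal{C}_1$ can lose its active status only by Case $\varepsilon_3$ or by merging (Case $\varepsilon_1$) with the component holding $d_1$; since the former is excluded, the component of $u$ must eventually merge with $d_1$. Thus $u$ is connected to $d_1$ in $F_1$, as required.

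For (ii), I would combine the precondition of Case $\varepsilon_3$ with Lemma~\ref{lemma:activesets}. A component $\overline{C}$ is deactivated in Case $\varepsilon_3$ only if $\overline{C}\in\mathfrak{C}$, i.e.\ $Children(\overline{C})=\emptyset$. By Lemma~\ref{lemma:activesets}.1, each target $u\in\overline{C}$ satisfies $C^k_2(u)\subseteq C^k_1(u)=\overline{C}$ unless $C^k_2(u)$ already contains $d_2$; the absence of children rules out the first alternative, so at the moment $\overline{C}$ is deactivated every target of $\overline{C}$ is connected to $d_2$ in $F_2$. Because $F_2$ only gains edges as the algorithm proceeds, this connectivity persists to termination. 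Every marked vertex acquires its label from some such $\overline{C}$, so every target of $\mathfrak{X}$ is connected to $d_2$ in $F_2$.

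The main obstacle is (iii): although ``removing as many edges as possible'' forces every leaf of $F'_2$ to lie in $\mathfrak{X}\cup\{d_2\}$, minimality alone does not preclude a target $v\in T\setminus\mathfrak{X}$ from sitting as an interior cut vertex on an $F_2$-path between some target of $\mathfrak{X}$ and $d_2$ (such a $v$ can indeed join $d_2$ in $F_2$ before its $\mathcal{C}_1$-component later merges with $d_1$). To rule this out I would exploit the laminar family formed by the labels together with the same-parent merging established in the proof of Lemma~\ref{lemma:activesets}: whenever an $F_2$-edge is added in Case $\varepsilon_2$ between two active components, inequality~(\ref{varepsilon}) forces their common $\mathcal{C}_1$-parent to have merged first, so each $F_2$-subtree grows inside a single $\mathcal{C}_1$-component before attaching to $d_2$. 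I would then argue that each maximal label $\overline{C}_i$ attaches to the $d_2$-component of $F_2$ through targets of $\mathfrak{X}$ only, so that deleting every $T\setminus\mathfrak{X}$ target leaves each $\overline{C}_i$ still connected to $d_2$; consequently the minimal $F'_2$ can be taken to avoid $T\setminus\mathfrak{X}$ entirely, giving disjointness. Carefully tracking this attachment structure through the sequence of merges is the step I expect to require the most work.
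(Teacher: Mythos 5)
Your proposal is correct and follows essentially the same route as the paper: feasibility from the termination condition plus the Case $\varepsilon_3$ precondition (no children, hence connectivity to $d_2$ via Lemma~\ref{lemma:activesets}.1), and disjointness from inequality~(\ref{varepsilon}) with the Case $\varepsilon_1$ priority. The ``attachment structure'' step you flag as the hardest is exactly what the paper proves, in one stroke, as the claim that no target $u\in\overline{C}_i$ can be $F_2$-adjacent to any target $v\in T\setminus\overline{C}_i$ (since Case $\varepsilon_1$ would preempt adding such an edge to $F_2$), which immediately confines each $F'_{2}$-path to $\overline{C}_i\bigcup\{d_2\}$.
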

\begin{proof}
The algorithm terminates when all the sets of ${\mathcal{C}}_{1}$ become inactive. This is only possible if each of the targets in $T$ is either connected to $d_1$ or $d_2$. Note that $F'_1$ is formed from $F_1$ such that each of the unmarked vertices remain connected to $d_1$. The only vertices not spanned by $F'_1$ are some of the marked vertices. These vertices were marked because the components in
${\mathcal{C}}_1$ that span these vertices were deactivated for making their associated constraints in (\ref{eq:dualbound}) tight. In addition, a component in ${\mathcal{C}}_1$ can become deactivated due to a constraint in (\ref{eq:dualbound}) only if it has already lost all its children, $i.e.$, each of these vertices in the component is already connected to $d_2$. Therefore, by the construction of $F'_2$, each of the marked vertices not spanned by $F'_1$ must be connected to $d_2$ and spanned by $F'_2$. Hence, the algorithm produces a feasible, heterogeneous spanning forest.

Consider any deactivated component $\overline{C}_i \subseteq \mathfrak{X}$. $\overline{C}_i$ can get deactivated during an iteration only if $\overline{C}_i$ does not have children and $\sum_{S\subseteq \overline{C}_i} Y_1(S) = w(\overline{C}_i)= Bound(\overline{C}_i) = \sum_{S \subseteq \overline{C}_i} Y_2(S)$. Note that $\overline{C}_i$ could have lost all its children only if all the targets in $\overline{C}_i$ are already connected to $d_2$ in $F_2$. Also, {\it during the iteration} when $\overline{C}_i$ gets deactivated, no target $u \in \overline{C}_i$ is connected to any other target $v \in T\setminus \overline{C}_i$ in $F_1$.
As a result, from lemma \ref{lemma:activesets}, we claim that $u$ does not have an adjacent vertex $v$ in $F_2$ such that $v\in T\setminus \overline{C}_i$. If this claim is not true, then from lemma \ref{lemma:activesets} and equation (\ref{varepsilon}), it follows that the algorithm would have added edge $(u,v)$ to $F_1$ before adding $(u,v)$ to $F_2$. Since target $u$ is not connected to target $v\in T\setminus \overline{C}_i$ in $F_1$, $u$ and $v$ cannot be connected in $F_2$. Therefore, during the construction of $F'_2$, all the edges that are incident on any vertex $u \notin \mathfrak{X}$ can be dropped. Hence, any vertex spanned by the edges in $F'_1$ is not spanned by the edges in $F'_2$ and vice versa.
\end{proof}

The main result of this article is in the following subsection.

\subsection{Proof of the Approximation Ratio}

\begin{theorem}\label{theorem:2}
The primal-dual algorithm produces a tree with edges denoted by $F'_1$ for the first vehicle and a tree with edges denoted by $F'_2$ for the second vehicle such that the cost of the edges in these trees is bounded by the cost for the dual problem, $i.e.$,

{{ \[\sum_{e\in F'_1} cost^1_e + \sum_{e\in F'_2} cost^2_e \leq 2\sum_{S \subseteq T}Y_1(S). \]
}}
Since $2\sum_{S \subseteq T}Y_1(S)$ is a lower bound to the optimal cost of the 2DHTSP, it follows that the cost of the HSF found by the primal dual algorithm is at most equal to the optimal cost of the 2DHTSP. This provides a 2-approximation algorithm for the 2DHTSP.
\end{theorem}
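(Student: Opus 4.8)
The plan is to follow the standard primal-dual accounting of Goemans and Williamson, adapted to the two coupled forests. First I would invoke complementary slackness: every edge $e$ that survives into $F'_1$ was added exactly when its constraint (\ref{eq:dualfirst}) became tight, and since the $Y_1$ variables only grow afterwards, $cost^1_e = \sum_{S: e\in\delta_1(S)} Y_1(S)$; the analogous identity holds for $e\in F'_2$ via (\ref{eq:dualsecond}) and $Y_2$. Summing over the edges and exchanging the order of summation rewrites the left-hand side as $A+B$, where $A = \sum_{S\subseteq T} Y_1(S)\,|F'_1\cap\delta_1(S)|$ and $B=\sum_{S\subseteq T} Y_2(S)\,|F'_2\cap\delta_2(S)|$. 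The goal then becomes bounding $A$ and $B$ separately and recombining them against the single quantity $2\sum_{S}Y_1(S)$.

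To bound $A$ I would run the usual degree argument iteration by iteration. At the start of iteration $k$ I contract each current component of ${\mathcal{C}}_1$ to a point and examine the forest that $F'_1$ induces on these contracted nodes, so that $A=\sum_k \varepsilon_{min}^{(k)} \sum_{C\ \mathrm{active}} |F'_1\cap\delta_1(C)|$. The key observation is that, by Lemma \ref{lemma:feasibility}, no edge of $F'_1$ is incident to any vertex of $\mathfrak{X}$; hence every active component lying inside a future deactivated set $\overline{C}_i$ is an isolated node of the contracted forest, contributing zero degree while still being counted as active. The pruning step of Algorithm \ref{alg:pruning} guarantees that every leaf of the contracted forest is an active component that $F'_1$ ultimately attaches to $d_1$, and the elementary fact that in a forest the total degree of a vertex set containing all leaves is at most twice its cardinality then gives, in each iteration, $\sum_{C\ \mathrm{active}} |F'_1\cap\delta_1(C)| \le 2\,a_1^{\mathrm{span},k}$, where $a_1^{\mathrm{span},k}$ counts only the active ${\mathcal{C}}_1$ components destined to be spanned by $F'_1$. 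Telescoping over $k$ yields $A \le 2\sum_{S\ \mathrm{spanned\ by}\ F'_1} Y_1(S)$.

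For $B$ I would apply the same degree argument to $F'_2$. By Lemma \ref{lemma:feasibility} the edges of $F'_2$ lie entirely inside the deactivated components $\overline{C}_1,\dots,\overline{C}_m$, so the forest bound gives $B \le 2\sum_{i}\sum_{S\subseteq \overline{C}_i} Y_2(S)$. Here the coupling constraint (\ref{eq:dualbound}) does the decisive work: each $\overline{C}_i$ is deactivated in \textit{Case} $\varepsilon_3$ precisely when $\varepsilon_3=Bound(\overline{C}_i)-w(\overline{C}_i)=0$, i.e. when $\sum_{S\subseteq\overline{C}_i} Y_1(S) = w(\overline{C}_i) = Bound(\overline{C}_i) = \sum_{S\subseteq\overline{C}_i} Y_2(S)$, and these frozen values let me convert the $Y_2$ budget into a $Y_1$ budget, $B \le 2\sum_i\sum_{S\subseteq\overline{C}_i} Y_1(S)$. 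Since a set spanned by $F'_1$ has all its vertices outside $\mathfrak{X}$ while any $S\subseteq\overline{C}_i$ has all its vertices in $\mathfrak{X}$, the two index ranges are disjoint subfamilies of $\{S:Y_1(S)>0\}$, and adding the two bounds gives $A+B\le 2\sum_{S\subseteq T}Y_1(S)$. Finally, because $(Y_1,Y_2)$ is dual-feasible by construction and the integer program is a relaxation of the 2DHTSP, weak duality gives $2\sum_{S}Y_1(S)\le C_{lp}\le$ the optimal cost, so the doubled HSF is a $2$-approximation.

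The main obstacle I anticipate is the bookkeeping in the degree argument: verifying cleanly that every active component lying inside a future $\overline{C}_i$ genuinely contributes zero $F'_1$-degree (so that the right-hand active count is restricted to $F'_1$-spanned components), that symmetrically the pruning leaves every leaf of each contracted forest active, and that the per-iteration inequalities telescope to exactly the claimed sums without double counting. Lemma \ref{lemma:activesets} and Lemma \ref{lemma:feasibility} are precisely the tools that justify the zero-degree claim and the disjointness of the two index families, so the care lies in applying them at the right snapshots of the algorithm rather than in any single hard estimate.
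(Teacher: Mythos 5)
Your proposal is correct and follows essentially the same argument as the paper's proof: edge-tightness of constraints (\ref{eq:dualfirst})--(\ref{eq:dualsecond}) to rewrite the primal cost as degree-weighted dual sums, a per-iteration degree count over contracted components, the pruning-step fact that components deactivated through (\ref{eq:dualbound}) (all vertices marked) cannot be leaves, the tightness of (\ref{eq:dualbound}) at deactivation to trade the $Y_2$ budget inside each $\overline{C}_i$ for a $Y_1$ budget, and weak duality at the end. The differences are purely organizational---the paper converts the dual objective via the tight constraints upfront and proves one combined per-iteration inequality, whereas you telescope the two forests separately and convert at the end---apart from one small slip in your leaf claim (the inactive component containing $d_1$ may well be a leaf of the contracted tree), an exception the paper states explicitly and which your degree arithmetic absorbs in any case, since the remaining inactive components each have degree at least $2$.
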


\begin{proof} In order to prove the above theorem, we first simplify the dual cost obtained by the algorithm as follows:

{\begin{align}
2\sum_{S \subseteq T}Y_1(S) %& = 2\sum_{S \subseteq T,S \nsubseteq \widetilde{C}}Y_1(S) + 2\sum_{S \subseteq \widetilde{C}}Y_1(S) \nonumber \\
& = 2 \sum_{S \subseteq T,\newline S \nsubseteq \overline{C}_i,i=1,..,m}Y_1(S) + 2\sum_{i=1}^m\sum_{S \subseteq \overline{C}_i}Y_1(S) \nonumber \\
& = 2\sum_{S \subseteq T,S \nsubseteq \overline{C}_i,i=1,..,m}Y_1(S) + 2\sum_{i=1}^m\sum_{S \subseteq \overline{C}_i}Y_2(S) \label{cost_dual}.
\end{align}
}
Now, we express the cost of the edges in the first tree in terms of the dual variables as follows. Note that edge $e$ is added to $F_1$ and
consequently appears in $F'_1$ only if the corresponding constraint in (\ref{eq:dualfirst}) is tight, $i.e.$, $cost^1_e = \sum_{S:e \in \delta_1(S)} Y_1(S)$. Therefore,

{ \begin{align*}
\sum_{e\in F'_1} cost^1_e& = \sum_{e\in F'_1} \sum_{S:e \in \delta_1(S)} Y_1(S) \\
& = \sum_{S \subseteq T } Y_1(S) |F'_1 \bigcap \delta_1(S)|.
\end{align*}
}
Since $F'_1 \bigcap \delta_1(S)=0$ for any $S\subseteq \overline{C}_i$, we can further simplify the above equation to
{ \begin{align}
\sum_{e\in F'_1} cost^1_e& = \sum_{S \subseteq T, S \nsubseteq \overline{C}_i,i=1,..,m } Y_1(S) |F'_1 \bigcap \delta_1(S)| \label{cost_primal_1}.
\end{align}
}
Similarly, we can also express the cost of the edges in the second tree in terms of the dual variables as follows. From lemma \ref{lemma:feasibility}, note that $F'_2$ can be decomposed into a set of disjoint sets $F'_{2i}$ where each $F'_{2i}$ consists of edges that form a tree spanning each target from $\overline{C}_{i}$ and the depot $d_2$. An edge $e$ is added to $F_{2}$ and consequently appears in $F'_{2i}$ only if the corresponding constraint in (\ref{eq:dualsecond}) is tight, $cost^2_e = \sum_{S:e \in \overline{\delta}_{2i}(S), S\subseteq \overline{C}_i} Y_2(S)$ where $\overline{\delta}_{2i}(S)$ consists of all the edges with one endpoint in $S$ and another end point in $\overline{C}_i\bigcup \{d_2\}\setminus S$.

{
\begin{align}
\sum_{e\in F'_2}cost^2_e & = \sum_{i=1}^m \sum_{e \in F'_{2i}} cost^2_e \nonumber \\
 & = \sum_{i=1}^m \sum_{e\in F'_{2i}} \sum_{S:e \in \overline{\delta}_{2i}(S), S\subseteq \overline{C}_i} Y_2(S) \nonumber \\
 & = \sum_{i=1}^m  \sum_{S \subseteq \overline{C}_i} Y_2(S) ~|F'_{2i} \bigcap \overline{\delta}_{2i}(S)|. \label{cost_primal_2}
\end{align}
}

Therefore, from equations (\ref{cost_dual}), (\ref{cost_primal_1}), (\ref{cost_primal_2}), the proof for the theorem reduces to showing the following result:
{
\begin{align}
\sum_{S \subseteq T, S \nsubseteq \overline{C}_i,i=1,..,m } Y_1(S) |F'_1 \bigcap \delta_1(S)| + \sum_{i=1}^m  \sum_{S \subseteq \overline{C}_i} Y_2(S) ~|F'_{2i} \bigcap \overline{\delta}_{2i}(S)| \\ \leq 2\sum_{S \subseteq T,S \nsubseteq \overline{C}_i,i=1,..,m}Y_1(S) + 2\sum_{i=1}^m\sum_{S \subseteq \overline{C}_i}Y_2(S). \end{align}
}

The above result can be shown by proving that during any iteration, the increase in the primal cost (the left-hand side of the above inequality) is at most equal to the increase in the dual cost (the right-hand side of the above inequality). To see this, let us choose any iteration of the primal-dual algorithm. At the start of this iteration, let $N_{a}$ be the set of all the active components in ${\mathcal{C}}_1$ such that each active component in this set is not a subset of $\mathfrak{X}$ and $N_{d}$ be the set of all the inactive components in ${\mathcal{C}}_1$ such that each inactive component in this set is not a subset of $\mathfrak{X}$. Note that one of inactive components of $N_d$ must consist of the depot $d_1.$ For $i=1,\cdots,m$, let $M_{ai}$ denote the set of all the active components in ${\mathcal{C}}_2$ such that each active component in this set is a subset of $\overline{C}_i$. Also, let $M_{d}$ denote the inactive component in ${\mathcal{C}}_2$ that consists of the depot $d_2$.

Now, form a graph $H_1$ with components in $N_a\bigcup N_d$ as its vertices and edges $e\in F'_1 \bigcap  \delta_1(C)$ for $C\in N_a\bigcup N_d$ as edges of $H_1$. $H_1$ is a tree that spans all the vertices in $N_a\bigcup N_d$. Similarly, form a graph $H_{2i}$ with components in $M_{ai}\bigcup M_d$ as its vertices and edges $e\in F'_{2i} \bigcap {\delta}_2(C) $ for $C\in M_{ai}\bigcup \{M_d\}$ as edges of $H_{2i}$. $H_{2i}$ is a tree that spans all the vertices in $M_{ai}\bigcup \{M_d\}$.

Let $deg(v,G)$ represent the degree of vertex $v$ in graph $G$. During the iteration, the dual variable corresponding to each of the active components is increased by $\varepsilon_{min}$. As the result, the left hand side of the inequality will increase by $\varepsilon_{min}(\sum_{v\in N_a} deg(v,H_1) + \sum_{i=1}^m  \sum_{v\in M_{ai}} deg(v,H_{2i}))$ whereas the right hand side of the inequality will increase by $2\varepsilon_{min} (N_a + \sum_{i=1}^m M_{ai})$. Therefore, basically, the proof is complete if we can show that

{{
\begin{equation}
\sum_{v\in N_a} deg(v,H_1) + \sum_{i=1}^m  \sum_{v\in M_{ai}} deg(v,H_{2i}) \leq 2 (|N_a| + \sum_{i=1}^m |M_{ai}|).
\end{equation}
}}

\begin{figure}[htbp]
 \centering
  % Requires \usepackage{graphicx}
  \includegraphics[scale=0.5]{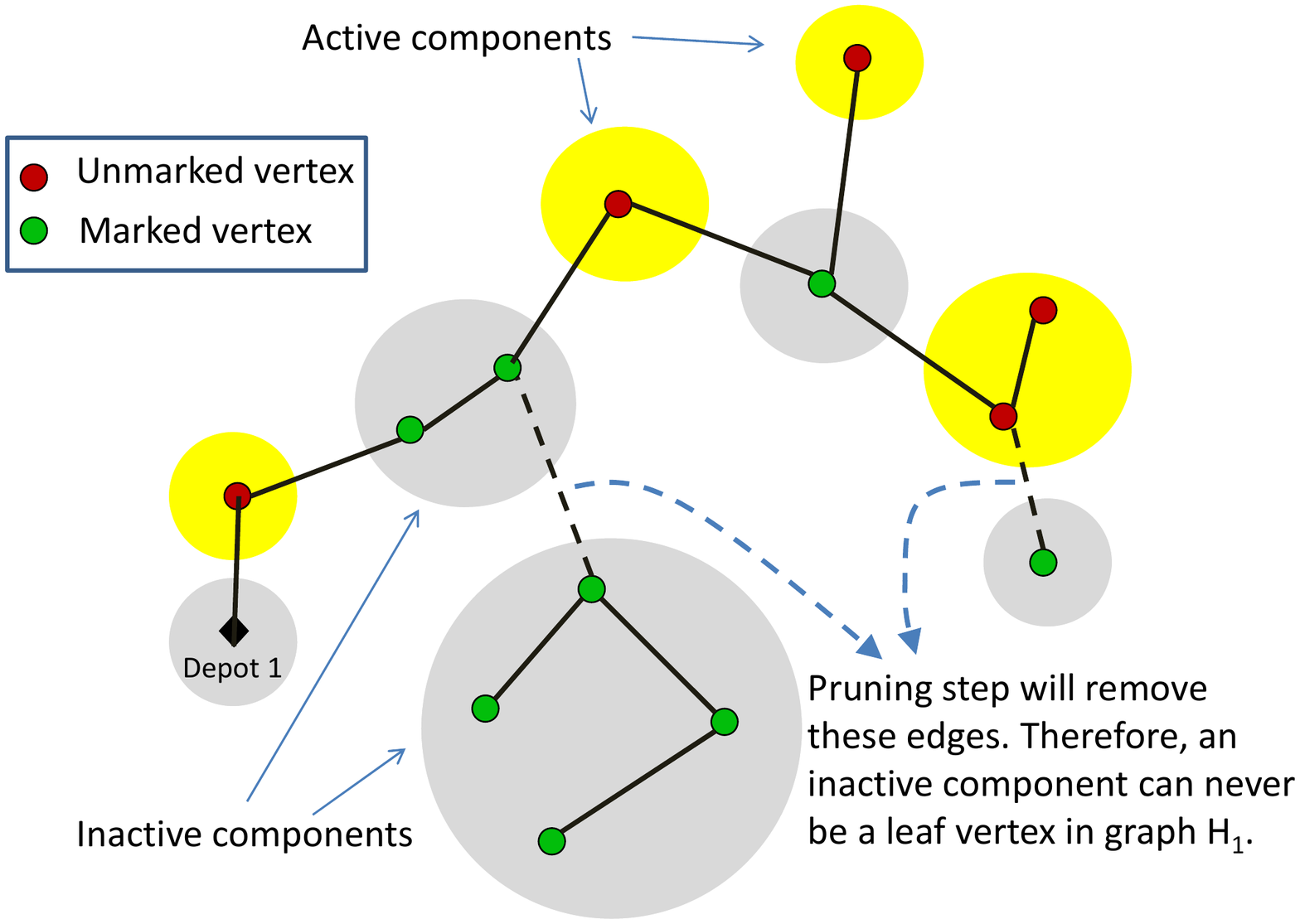}\\
  \caption{An example which illustrates that the graph $H_1$ cannot have an inactive component as its leaf vertex unless it contains $d_1$. The circles indicate all the active and the inactive components corresponding to the first vehicle at the start of an iteration. }
  \label{degreeH}
\end{figure}

We now claim that any vertex $v$ in $H_1$ that represents an inactive component in $N_d$ must have its degree $deg(v,H_1) \geq 2$ unless the inactive component contains the depot $d_1$. This result follows from the fact that a component, which does not contain $d_1$, can become inactive in ${\mathcal{C}}_1$ only if the constraint associated with this component in (\ref{eq:dualbound}) becomes tight. Therefore, all the vertices in this inactive component must be marked. Also, if vertex $v$ is a leaf ($deg(v,H_1)=1)$ then pruning all the edges from this inactive component will not disconnect any unmarked target from $d_1$. Hence, the pruning step of the algorithm will ensure that an inactive component can never be a leaf vertex in $H_1$ unless it contains $d_1$. Refer to figure \ref{degreeH} for an illustration of this claim. Hence, $\sum_{v\in N_d} deg(v,H_1) \geq 2|N_d|-1$. We now show the final part of the proof:

{
\begin{align}
& \sum_{v\in N_a} deg(v,H_1) +  \sum_{i=1}^m  \sum_{v\in M_{ai}} deg(v,H_{2i})  \\
= & \sum_{v\in N_a\bigcup N_d} deg(v,H_1) - \sum_{v\in N_d} deg(v,H_1)\\
 & + \sum_{i=1}^m  [\sum_{v\in M_{ai}\bigcup \{M_d\}} deg(v,H_{2i}) - deg(M_d,H_{2i})]\\
\leq & \sum_{v\in N_a\bigcup N_d} deg(v,H_1) - \sum_{v\in N_d} deg(v,H_1)\\
 & + \sum_{i=1}^m  [\sum_{v\in M_{ai}\bigcup \{M_d\}} deg(v,H_{2i})]\\
 \end{align}

$H_1$ is a tree that spans all the vertices in $N_a\bigcup N_d$. Therefore, the sum of the degree of all the vertices in $H_1$ is 2($|N_a| +|N_d| -1$). Similarly, $H_{2i}$ is a tree that spans all the vertices in $M_{ai}\bigcup \{M_d\}$. Therefore, the sum of the degree of all the vertices in $H_{2i}$ is 2$|M_{ai}|$. Hence, continuing with the proof,

\begin{align}
& \sum_{v\in N_a} deg(v,H_1) +  \sum_{i=1}^m  \sum_{v\in M_{ai}} deg(v,H_{2i})  \\
\leq & 2(|N_a|+|N_d|-1) - (2|N_d|-1) + 2\sum_{i=1}^m |M_{ai}| \\
< & 2|N_a|+ 2\sum_{i=1}^m |M_{ai}|.
\end{align}
}
Hence proved.
\end{proof}

\bibliographystyle{IEEEbib}
\bibliography{refer_thesis}

\end{document}